\newtheorem{theorem}             {Theorem}
\newtheorem{proposition}[theorem]{Proposition}
\newtheorem{lemma}      [theorem]{Lemma}
\newcommand{\natarrow}{\rTo^{.}}
\newcommand{\Sig}{\sigma}
\newcommand{\sg}{\sigma}
\newcommand{\CS}{\mathcal{R}(\Sig)}
\newcommand{\CSf}{\mathcal{R}_{\fin}(\Sig)}
\newcommand{\Rs}{\CS}
\newcommand{\Rsf}{\CSf}
\newcommand{\fin}{f}
\newcommand{\id}{\mathsf{id}}
\newcommand{\ie}{i.e.~}
\newcommand{\rarr}{\rightarrow}
\newcommand{\va}{\mathbf{a}}
\newcommand{\vx}{\mathbf{x}}
\newcommand{\vy}{\mathbf{y}}
\newcommand{\vo}{\mathbf{o}}
\newcommand{\CC}{\mathcal{C}}
\newcommand{\ZZ}{\mathbb{Z}}
\newcommand{\MM}{\mathcal{M}}
\newcommand{\IMP}{ \; \Rightarrow \;}
\newcommand{\arq}{\stackrel{q}{\rightarrow}}
\newcommand{\arr}{\rightarrow}
\newcommand{\As}{\mathcal{A}}
\newcommand{\Bs}{\mathcal{B}}
\newcommand{\RA}[1]{R^{\mathcal{A}}_{#1}}
\newcommand{\RB}[1]{R^{\mathcal{B}}_{#1}}
\newcommand{\EE}{\mathcal{E}}
\newcommand{\EExiy}{\mathcal{E}^{i}_{\vx, y}}
\newcommand{\EEix}{\mathcal{E}^{i}_{\vx}}
\newcommand{\EExjy}{\mathcal{E}^{j}_{\vx', y}}
\newcommand{\EExy}{\mathcal{E}_{\vx, \vy}}
\newcommand{\EEx}{\mathcal{E}_{\vx}}
\newcommand{\EEax}{\mathcal{E}^{a}_{\vx}}
\newcommand{\EEaxy}{\mathcal{E}^{a}_{\vx, \vy}}
\newcommand{\FFx}{\mathcal{F}_{x}}
\newcommand{\FFxy}{\mathcal{F}_{x,y}}
\newcommand{\FFxyT}{\mathcal{F}_{x,y}^{T}}
\newcommand{\FFxypT}{\mathcal{F}_{x,y'}^{T}}
\newcommand{\HH}{\mathsf{H}}
\newcommand{\KK}{\mathsf{K}}
\newcommand{\Tr}{\mathsf{Tr}}
\newcommand{\Id}{\mathsf{Id}}
\newcommand{\Px}{P_{x}}
\newcommand{\Pxy}{P_{x,y}}
\newcommand{\Pxo}{P_{x,o}}
\newcommand{\Pvxvo}{P_{\vx,\vo}}
\newcommand{\Pxpop}{P_{x',o'}}
\newcommand{\Pxpyp}{P_{x',y'}}
\newcommand{\Qyz}{Q_{y,z}}
\newcommand{\Pxyp}{P_{x',y'}}
\newcommand{\Pxnyp}{P_{x,y'}}
\newcommand{\Pvx}{P_{\vx}}
\newcommand{\Pvxy}{P_{\vx,\vy}}
\newcommand{\vct}[1]{\mathsf{vec}( #1 )}
\newcommand{\Projd}{\mathsf{Proj}(d)}
\newcommand{\Projdp}{\mathsf{Proj}(d')}
\newcommand{\Projddp}{\mathsf{Proj}(dd')}
\newcommand{\Ze}{\mathbf{0}}
\newcommand{\Asq}{\sqrt{A}}
\newcommand{\Bsq}{\sqrt{B}}
\newcommand{\Alg}{\mathcal{A}}
\newcommand{\diag}{\mathsf{diag}}
\newcommand{\Complex}{\mathbb{C}}
\newcommand{\Qd}{\mathcal{Q}_{d}}
\newcommand{\Qdp}{\mathcal{Q}_{d'}}
\newcommand{\Qddp}{\mathcal{Q}_{dd'}}
\newcommand{\Qone}{\mathcal{Q}_{1}}
\newcommand{\Qa}{\mathcal{Q}_{a}}
\newcommand{\Qb}{\mathcal{Q}_{b}}
\newcommand{\Qc}{\mathcal{Q}_{c}}
\newcommand{\Qab}{\mathcal{Q}_{ab}}
\newcommand{\Qbc}{\mathcal{Q}_{bc}}
\newcommand{\Qabc}{\mathcal{Q}_{abc}}
\newcommand{\RQd}{R^{\mathcal{Q}_{d} \As}}
\newcommand{\RQdp}{R^{\mathcal{Q}_{d'} \As}}
\newcommand{\RQdQdp}{R^{\mathcal{Q}_{d}\mathcal{Q}_{d'} \As}}
\newcommand{\RQdB}{R^{\mathcal{Q}_{d} \Bs}}
\newcommand{\vp}{\mathbf{p}}
\newcommand{\Set}{\mathbf{Set}}
\newcommand{\unA}{\eta_{\As}}
\newcommand{\delx}{\delta_{x}}
\newcommand{\Qu}{\mathcal{Q}_{1}}
\newcommand{\mudd}{\mu^{d,d'}}
\newcommand{\muddA}{\mudd_{\As}}
\newcommand{\muddB}{\mudd_{\Bs}}
\newcommand{\Np}{\mathbb{N}^{+}}
\newcommand{\Cs}{\mathcal{\CC}}
\newcommand{\Kcomp}{\bullet}
\newcommand{\tddpAB}{m^{d,d'}_{\As, \Bs}}
\newcommand{\CSP}{\mathcal{K}}
\newcommand{\AK}{\As_{\CSP}}
\newcommand{\BK}{\Bs_{\CSP}}
\newcommand{\RcB}{R_{c}^{\BK}}
\newcommand{\RcA}{R_{c}^{\AK}}
\newcommand{\KAB}{\CSP_{AB}}
\newcommand{\ec}{e_{C}}
\newcommand{\Prob}{\mathsf{Prob}}
\newcommand{\Ke}{\CSP_{e}}
\newcommand{\AKe}{\As_{\Ke}}
\newcommand{\BKe}{\Bs_{\Ke}}
\newcommand{\bc}{b_{c}}
\newcommand{\pc}{p_{c}}
\newcommand{\Ax}{A_{x}}
\newcommand{\Axp}{A_{x'}}
\newcommand{\rxy}{r_{xy}}
\newcommand{\rxyp}{r_{x{y'}}}
\newcommand{\rxpyp}{r_{x'y'}}
\newcommand{\Qxyo}{Q_{xy,o}}
\newcommand{\Qxyf}{Q_{xy,0}}
\newcommand{\Qxyt}{Q_{xy,1}}
\newcommand{\Qxypt}{Q_{xy',1}}
\newcommand{\Qxpypt}{Q_{x'y',1}}
\begin{document}

\title{The Quantum Monad on Relational Structures}

\author[1]{Samson Abramsky}
\author[1]{Rui Soares Barbosa}
\author[2]{Nadish de Silva}
\author[2]{Octavio Zapata}
\affil[1]{Department of Computer Science, University of Oxford\\
\texttt{\{samson.abramsky,rui.soares.barbosa\}@cs.ox.ac.uk}}
\affil[2]{Department of Computer Science, University College London\\
\texttt{nadish.desilva@utoronto.ca, ocbzapata@gmail.com}}
\date{}
\maketitle

\begin{abstract}
Homomorphisms between relational structures play a central role in finite model theory, constraint satisfaction and database theory. 
A central theme in quantum computation is to show how quantum resources can be used to gain advantage in information processing tasks. 
In particular, non-local games have been used to exhibit quantum advantage in boolean constraint satisfaction, and to obtain quantum versions of graph invariants such as the chromatic number. 
We show how quantum strategies for homomorphism games between relational structures can be viewed as Kleisli morphisms for a quantum monad on the (classical) category of relational structures and homomorphisms. 
We use these results to exhibit a wide range of examples of contextuality-powered quantum advantage, and to unify several apparently diverse strands of previous work.
\end{abstract}

\section{Introduction}

Finite relational structures and the homomorphisms between them form a mathematical core common to finite model theory 
\cite{libkin2013elements}, constraint satisfaction \cite{feder1998computational}, and relational database theory \cite{kolaitis1998conjunctive}. Moreover, much of graph theory can be formulated in terms of the existence of graph homomorphisms, as expounded e.g.~in the influential text \cite{hell2004graphs}.
Thus, implicitly at least, the mathematical setting for all these works is categories of $\sigma$-structures and homomorphisms, for relational vocabularies $\Sig$.

What could it mean to quantize these structures? More precisely, with the advent of quantum computing, we can now consider the consequences of using quantum resources for carrying out various information-processing tasks. A major theme of current research is to delineate the scope of the \emph{quantum advantage} which can be gained by the use of quantum resources. How can this be related to these fundamental structures?

Our starting point is the notion of \emph{quantum graph homomorphism} introduced in \cite{manvcinska2016quantum} as a generalization of the notion of quantum chromatic number \cite{cameron2007quantum}.
Consider the following game, played by Alice and Bob cooperating against a Verifier. Their goal is to establish the existence of a homomorphism $G \rarr H$ for given graphs $G$ and $H$. Verifier provides vertices $v_1, v_2 \in V(G)$ to Alice and Bob respectively. They produce outputs $w_1, w_2 \in V(H)$ in response. No 
communication between Alice and Bob is permitted during the game. They win if the following conditions hold:
$v_1 = v_2 \IMP w_1 = w_2$ and $v_1 \sim v_2 \IMP w_1 \sim w_2$,
where we write $\sim$ for the adjacency relation.

If only classical resources are permitted, then the existence of a \emph{perfect strategy} for Alice and Bob --- one in which they win with probability 1 --- is equivalent to the existence of a graph homomorphism in the standard sense. However, using quantum resources, in the form of an entangled bipartite state where Alice and Bob can each perform measurements on their part, there are perfect strategies in cases where no classical homomorphism exists, thus exhibiting quantum advantage.

Alice--Bob games have also been studied for other tasks, notably for \emph{constraint systems}.
Consider the following system of linear equations over $\ZZ_2$:
\begin{center}
\begin{tabular}{lclcl}
$A \oplus B \oplus C = 0$    & $\quad$ &
$D \oplus E \oplus F = 0$ & $\quad$ & $G \oplus H \oplus I = 0$  \\
$A \oplus D \oplus G = 0$ & $\quad$ & $B \oplus E \oplus H = 0$ 
& $\quad$ & $C \oplus F \oplus I = 1$ \\
\end{tabular}
\end{center}
Of course, this system is not satisfiable in the standard sense, as we can see by summing over the left- and right-hand sides. Now consider the following Alice--Bob game. The Verifier sends Alice an equation, and Bob a variable. Alice returns an assignment to the variables in the equation, and Bob returns an assignment for his variable. They win if Bob's assignment agrees with Alice's, and moreover Alice's assignment satisfies the given equation. Classically, the existence of a perfect strategy is equivalent to the existence of a satisfying assignment for the whole system. Using quantum resources, there is a perfect strategy for the above system, which corresponds to Mermin's ``magic square'' construction \cite{Mermin90:SimpleUnifiedForm}. This can be generalized to a notion of quantum perfect strategies for a broad class of constraint systems \cite{cleve2014characterization,cleve2017perfect}, which have strong connections both to the study of contextuality in quantum mechanics, and to a number of challenging mathematical questions \cite{slofstra2016tsirelson,scholz2008tsirelson}.
Clearly, these games are analogous to those for graph homomorphisms. What is the precise relationship?

In \cite{manvcinska2016quantum}, generalizing results in \cite{cameron2007quantum}, the existence of a quantum perfect strategy for the homomorphism game from $G$ to $H$ is characterized in terms of the existence of a family  
$\{ E_{vw} \}_{v \in V(G), w \in V(H)}$ of projectors in $d$-dimensional Hilbert space for some $d$, subject to certain conditions. Analogous results for constraint systems are proved in \cite{cleve2014characterization}.
This characterization eliminates the two-person aspect of the game, and the shared state, leaving a ``projector-valued relation'' as the witness for existence of a quantum perfect strategy. We shall henceforth call these witnesses \emph{quantum graph homomorphisms}.
An important further step is taken in \cite{manvcinska2016quantum}. A construction $H \mapsto \mathsf{M} H$ on graphs is introduced, such that the existence of a quantum graph homomorphism from $G$ to $H$ is equivalent to the existence of a \emph{standard} graph homomorphism $G \arr  \mathsf{M} H$.

Our contribution begins at this point. We describe a general notion of non-local game for witnessing homomorphisms between structures for any relational signature. We show that the use of quantum resources in these games can be characterized by a notion of \emph{quantum homomorphism}. Moreover, quantum homomorphisms can in turn be characterized as the Kleisli morphisms for a \emph{quantum monad} on the (classical) category of relational structures and homomorphisms. This monad is \emph{graded}  \cite{milius2015generic} by the dimension of the Hilbert space.

Our account refines and generalizes the ideas from both \cite{cameron2007quantum,manvcinska2016quantum} and \cite{cleve2014characterization}. We characterize quantum solutions for general constraint satisfaction problems, showing as a special case that these subsume the binary constraint systems of \cite{cleve2014characterization}. We also show how quantum witnesses for strong contextuality in the sense of \cite{AbramskyBrandenburger} are characterized by quantum homomorphisms.

The precise relationship with the quantum graph homomorphisms of \cite{manvcinska2016quantum} turns out to be more subtle. We show that their notion is characterized by a quantum solution in our sense for a related boolean constraint system.
Overall, we show that a wide range of  notions of quantum advantage  is captured in a uniform way by the quantum monad, applied directly to the standard classical structures.

For reasons of limited space, some background material on linear algebra and quantum mechanics and some proofs have been relegated to an Appendix.

\section{From quantum perfect strategies to quantum homomorphisms}
\label{qpstoqhsec}

We write $[p] := \{ 1, \ldots , p \}$.
We fix a finite relational vocabulary $\sigma = \{ R_1, \ldots , R_p \}$, where $R_a$ has arity $k_a$, $a \in [p]$. 
A $\sg$-structure has the form $\As = (A, \RA{1}, \ldots , \RA{p})$, where $A$ is a non-empty set, and $\RA{a} \subseteq A^{k_{a}}$, $a \in [p]$. 
A homomorphism of $\sg$-structures $f : \As \rarr \Bs$ is a function $f : A \rarr B$ such that, for all $a \in [p]$ and $\vx \in A^{k_a}$: $\vx \in \RA{a} \IMP f(\vx) \in \RB{a}$.
Here we use vector notation: $\vx = (x_1, \ldots , x_{k_a})$, and $f(\vx) = (f(x_1), \ldots , f(x_{k_a}))$.
We denote the category of $\sg$-structures and homomorphisms by $\Rs$, and the full subcategory of finite structures by $\Rsf$.

We now consider the following game, played on finite structures $\As$, $\Bs$, in which Alice and Bob cooperate to convince a Verifier that there is a homomorphism from $\As$ to $\Bs$:
\begin{itemize}
\item Alice and Bob are separated, and not allowed to communicate (exchange classical information) while the game is played.
\item In a play of the game, the Verifier sends Alice an index $a$, and a tuple $\vx \in \RA{a}$; and  Bob an element $x \in A$.
\item Alice returns a tuple $\vy \in B^{k_a}$, and Bob returns an element $y \in B$.
\item Alice and Bob win that play if
\begin{itemize}
\item[(i)] $\vy \in \RB{a}$
\item[(ii)] $x = \vx_{i} \IMP y = \vy_{i}$, $i \in [k_{a}]$.
\end{itemize}
\end{itemize}
Alice and Bob may use probabilistic strategies. A \emph{perfect strategy} is one in which they win with probability~$1$.

It is clear that if only classical resources are allowed, the existence of a perfect strategy is equivalent to the existence of a homomorphism from $\As$ to $\Bs$. 
The actual strategy played by Alice and Bob may be pure or mixed, in the latter case using some shared randomness.

We now consider the use of quantum resources in the homomorphism game. We shall only consider the case of finite-dimensional resources in this paper. Such resources have the following general form:
\begin{itemize}
\item There are finite-dimensional Hilbert spaces $\HH$ and $\KK$, and a pure state $\psi$ on $\HH \otimes \KK$.
This state is shared between Alice and Bob. The separation between Alice and Bob is reflected in the fact that Alice can only perform operations on $\HH$, while Bob can only perform operations on $\KK$.
\item For each $a \in [p]$ and tuple $\vx \in \RA{a}$, Alice has a POVM $\EEax = \{ \EEaxy \}_{\vy \in B^{k_a}}$.
\item For each $x \in A$, Bob has a POVM $\FFx = \{ \FFxy \}_{y \in B}$
\end{itemize}
These resources are used as follows:
\begin{itemize}
\item Given $a$ and $\vx$, Alice measures $\EEax$ on her part of $\psi$.
\item Given $x$, Bob measures $\FFx$ on his part of $\psi$.
\item They obtain the joint outcome $(\vy, y)$ with probability $\psi^*(\EEaxy \otimes \FFxy)\psi$.
\end{itemize}
If with probability~1 the outcome $(\vy, y)$ satisfies the winning conditions, then this is a quantum perfect strategy.

We can write the winning conditions explicitly in terms of the quantum operations:
\[ \begin{array}{lll}
\mbox{(QS1)} & \psi^*(\EEaxy \otimes \FFxy)\psi = 0 & \quad \mbox{if $x = \vx_i$ and $y \neq \vy_i$} \\
\mbox{(QS2)} & \psi^*(\EEaxy \otimes I)\psi = 0 & \quad \mbox{if $\vy \not\in \RB{a}$} .
\end{array}
\]

A first remark is that our assumption of the bipartite structure of the state space does not in fact lose any generality. We could have asked simply that Bob's operators commute with those of Alice. However, since we are considering the finite-dimensional case, a result of Tsirelson \cite{tsirelson2006bell,scholz2008tsirelson} implies that this is equivalent to the tensor product formulation we have used.

Furthermore, using a pure state also does not lose any generality. Indeed, if we had a mixed state $\rho = \sum_i p_i \psi_i \psi_i^*$, with the trace replacing the inner products in (QS1) and (QS2), then the linearity of the trace implies that we could just as well use any of the pure states $\psi_i$ with the same measurements, and still satisfy the conditions.

We shall now show that in fact a quantum perfect strategy can without loss of generality be assumed to have a very special form, which will lead us to the equivalence with quantum homomorphisms.
These results combine ingredients from \cite{cleve2014characterization} and \cite{manvcinska2016quantum,cameron2007quantum}. The proofs are closest to those in \cite{roberson2013variations}, but are considerably simpler as well as more general.

For notational convenience, we shall focus on the case where the relational signature has a single $k$-ary relation $R$. Thus a quantum perfect strategy for the homomorphism game from $\As$ to $\Bs$ has the form 
$(\psi, \{ \EEx \}_{\vx \in \RA{}}, \{ \FFx \}_{x \in A})$,
where $\EEx = \{ \EExy \}_{\vy \in B^k}$ and $\FFx = \{ \FFxy \}_{y \in B}$ are POVMs satisfying the conditions (QS1) and (QS2).

Our first step is to show that without loss of generality, $\psi$ can be taken to have full Schmidt rank.
\begin{lemma}
\label{fullranklemm}
Given a quantum perfect strategy $(\psi', \{ \EEx' \}_{\vx \in \RA{}}, \{ \FFx' \}_{x \in A})$, we can find a strategy 
$(\psi, \{ \EEx \}_{\vx \in \RA{}}, \{ \FFx \}_{x \in A})$ where $\psi \in \Complex^d \otimes \Complex^d$ has the form $\sum_{i=1}^d \lambda_i e_i \otimes e_i$ with $\lambda_i > 0$ for all $i$.
\end{lemma}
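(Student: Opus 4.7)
The plan is to use the Schmidt decomposition of $\psi'$ to cut down to the effective subspace where it is supported, and then transport Alice's and Bob's measurements to that subspace via isometries.

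First, I would write the Schmidt decomposition $\psi' = \sum_{i=1}^{d} \lambda_i e_i \otimes f_i$ with $\lambda_i > 0$, where $\{e_i\}_{i=1}^d \subset \HH$ and $\{f_i\}_{i=1}^d \subset \KK$ are orthonormal sets and $d$ is the Schmidt rank. Define isometries $V_A: \Complex^d \to \HH$ by $|i\rangle \mapsto e_i$ and $V_B: \Complex^d \to \KK$ by $|i\rangle \mapsto f_i$, so that $V_A^* V_A = I_{\Complex^d} = V_B^* V_B$, and set $\psi := \sum_{i=1}^d \lambda_i |i\rangle \otimes |i\rangle \in \Complex^d \otimes \Complex^d$. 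By construction $(V_A \otimes V_B)\psi = \psi'$, and $\psi$ has full Schmidt rank with all $\lambda_i > 0$.

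Next, I would define the restricted POVMs by $\EExy := V_A^* \EExy' V_A$ and $\FFxy := V_B^* \FFxy' V_B$. These are valid POVMs on $\Complex^d$: each element is positive semidefinite (conjugation by an isometry preserves positivity), and since $V_A^* V_A = I_{\Complex^d}$,
\[ \sum_{\vy} \EExy = V_A^* \Bigl( \sum_{\vy} \EExy' \Bigr) V_A = V_A^* I_{\HH} V_A = I_{\Complex^d}, \]
and analogously for $\FFx$. The key computation is then that for any $\vx, \vy, x, y$,
\[ \psi^*(\EExy \otimes \FFxy)\psi = \psi^*(V_A^* \EExy' V_A \otimes V_B^* \FFxy' V_B)\psi = ((V_A \otimes V_B)\psi)^*(\EExy' \otimes \FFxy')((V_A \otimes V_B)\psi) = (\psi')^*(\EExy' \otimes \FFxy')\psi', \]
and similarly $\psi^*(\EExy \otimes I)\psi = (\psi')^*(\EExy' \otimes I)\psi'$ because $V_B^* I_{\KK} V_B = I_{\Complex^d}$. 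Hence the winning conditions (QS1) and (QS2) transfer verbatim from the old strategy to the new one.

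The only subtle point — and the one worth double-checking carefully — is that the two Schmidt spaces have the same dimension $d$, which is what allows us to present the new state as an element of $\Complex^d \otimes \Complex^d$ rather than $\Complex^{d_A} \otimes \Complex^{d_B}$; this is automatic from the Schmidt decomposition. Everything else is a routine verification using the isometry identities $V_A^* V_A = V_B^* V_B = I_{\Complex^d}$ together with the factorization $(V_A \otimes V_B)\psi = \psi'$, and no further quantum input is needed.
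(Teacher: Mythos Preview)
Your proof is correct and is essentially the same argument as the paper's: you compress to the Schmidt support via the isometries $V_A,V_B$ (the paper's $P_A^{*},P_B^{*}$), conjugate the POVMs, and observe that all the relevant inner products are preserved because $(V_A\otimes V_B)\psi=\psi'$ and $V_A^{*}V_A=V_B^{*}V_B=I_{\Complex^d}$. The only cosmetic difference is the direction in which you phrase the maps; the content is identical.
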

\begin{proof}
Our proof will follow closely the first part of the proof of Theorem~6.5.1 in \cite{roberson2013variations}, so we omit detailed calculations.
We write the Schmidt decomposition of $\psi'$ as $\sum_{i=1}^d \lambda_i \alpha_i \otimes \beta_i \in \Complex^{d_A} \otimes \Complex^{d_B}$, where $\{ \alpha_i \}$, $\{ \beta_i \}$ are orthonormal families of vectors, and $\lambda_i > 0$.

We define $P_A := \sum_{i=1}^d e_i \alpha_i^*$, $P_B := \sum_{i=1}^d e_i \beta_i^*$. Thus $P_A : \Complex^{d_A} \rarr \Complex^{d}$, and $P_B : \Complex^{d_B} \rarr \Complex^{d}$. It is straightforward to verify that $P_A P_A^* = I_d = P_B P_B^*$. We have $\psi := (P_A \otimes P_B)\psi'$, and $\psi' = (P_A^* \otimes P_B^*)\psi$.

Similarly, we define $\EExy := P_A \EExy' P_A^*$, and $\FFxy := P_A \FFxy' P_B^*$. Again, it is straightforward to verify that this yields well-defined POVMs, and moreover that the probabilities are preserved:
$\psi^* (\EExy \otimes \FFxy) \psi = \psi'^* (\EExy' \otimes \FFxy') \psi'$.
Thus $(\psi, \{ \EEx \}_{\vx \in \RA{}}, \{ \FFx \}_{x \in A})$ is a quantum perfect strategy.
\end{proof}

The following simple general result will be useful.
\begin{lemma}
\label{idemlemm}
Let $\Alg$ be a $*$-algebra, and $a$, $b$, $d$ be self-adjoint elements of  $\Alg$, where $d$ is also  invertible.
Suppose that $ad = adb = db$. Then $a$ and $b$ are projectors, and they both commute with $d^2$.
If $\Alg$ is a $\mbox{C}^*$-algebra, then $a = b$.
\end{lemma}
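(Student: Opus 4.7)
The plan is to unpack the triple equality $ad = adb = db$ into its two piece-wise equalities and pair each with its adjoint, which is available because $a$, $b$, $d$ are all self-adjoint. On top of the hypotheses this yields $da = bd$ and $bda = bd$, giving a symmetric set of intertwining relations to work with.

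For the projector statements, the key observation is that substituting $ad = db$ into $adb = db$ gives $(ad)b = (db)b = db^2 = db$, and invertibility of $d$ cancels on the left to produce $b^2 = b$. For $a$, an analogous chain applied to $a^2 d$ --- namely $a^2 d = a(ad) = a(db) = (ad)b = (db)b = db^2 = db = ad$ --- collapses to $a^2 d = ad$, so $a^2 = a$ by the same cancellation. Self-adjointness then upgrades idempotence to ``projector''.

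For commutation with $d^2$, I would alternate $ad = db$ and $da = bd$ to shuttle the extra $d$ from the right of $a$ to the left: $ad^2 = (ad)d = (db)d = d(bd) = d(da) = d^2 a$, and symmetrically $bd^2 = (bd)d = (da)d = d(ad) = d(db) = d^2 b$.

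The $\mbox{C}^*$-algebra clause is where I expect the real obstacle. The plan is continuous functional calculus: since $b$ commutes with the positive element $d^2$, it commutes with $|d| := (d^2)^{1/2}$ and with $|d|^{-1}$. Writing $d = u|d|$ via the polar decomposition, with $u$ the sign of $d$ (a self-adjoint unitary, because $d$ is self-adjoint and invertible), the hypothesis $ad = db$ becomes $au = ub$, i.e., $a = ubu$; the desired $a = b$ then reduces to $[u, b] = 0$. This is the crux, and in general it demands that $d$ be \emph{positive} (so $u = I$). That is precisely how $d$ enters the application, via the diagonal of strictly positive Schmidt coefficients from Lemma~\ref{fullranklemm}, whence $a = |d|\,b\,|d|^{-1} = b$ closes the argument.
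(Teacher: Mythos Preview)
Your treatment of the first two conclusions mirrors the paper's. The paper derives $a^2=a$ directly from $a^2d=a(db)=adb=ad$ and then says ``similarly $b^2=b$''; your chains are cosmetic rearrangements of the same two-line computation. The commutation with $d^2$ is also identical in spirit: the paper writes $ad^2=dbd=d^2a$ using $ad=db$ and its adjoint $bd=da$, exactly as you do.

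For the $\mathrm{C}^*$-clause your route differs, and your caution is justified. The paper appeals to the ``standard fact'' that a self-adjoint $d$ commutes with anything commuting with $d^2$, and then concludes $da=ad=db\Rightarrow a=b$. That fact, however, is only standard for \emph{positive} $d$, where $d=(d^2)^{1/2}$ lies in the $\mathrm{C}^*$-algebra generated by $d^2$; for general self-adjoint invertible $d$ it is false. Your polar-decomposition analysis exposes the same obstruction via the sign unitary $u$, and your conclusion that one needs $d\geq 0$ is correct. Indeed the lemma as stated admits a counterexample: with
\[
d=\begin{pmatrix}1&0\\0&-1\end{pmatrix},\qquad
b=\tfrac12\begin{pmatrix}1&1\\1&1\end{pmatrix},\qquad
a=\tfrac12\begin{pmatrix}1&-1\\-1&1\end{pmatrix},
\]
one checks $ad=db=adb$, $a$ and $b$ are projectors commuting with $d^2=I$, yet $a\neq b$. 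So your reduction to $[u,b]=0$ is not a failure of your method but a genuine missing hypothesis. As you note, in the paper's application $d$ is the diagonal of strictly positive Schmidt coefficients, hence $d>0$ and $u=I$, and both your argument and the paper's then go through.
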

\begin{proof}
First, $a^2d = adb = ad$. Since $d$ is invertible, this implies that $a^2 = a$, so $a$ is a projector. Similarly, $b^2 = b$.
Moreover, since $a$, $b$  and $d$ are self-adjoint, $db = ad \iff bd = da$.
Hence $ad^2 = dbd  = d^2a$.

If $\Alg$ is a $\mbox{C}^*$-algebra, then it is standard that $d$ commutes with every element which commutes with $d^2$ \cite{blackadar2006operator,pedersen2012analysis},
so $da = ad = db$, and since $d$ is invertible, this yields $a = b$.
\end{proof}

We shall now show that under the assumption of full Schmidt rank, the measurements are already remarkably constrained. 
We define $\EExiy := \sum_{\vy_i = y} \EExy$.
\begin{lemma}
\label{pvmtlemm}
Let $(\psi, \{ \EE_{\vx} \}, \{ \FFx \})$ be a quantum perfect strategy in which $\psi$ has full Schmidt rank.
Then for all $\vx$, $i$, $y$, $\EExiy$ and $\FFxy$ are projectors, and $\EExiy = \FFxyT$ whenever $x = \vx_i$.
\end{lemma}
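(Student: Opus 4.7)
My plan is to use the full Schmidt rank of $\psi$ to recast the bilinear form $\psi^*(M \otimes N)\psi$ as a single-copy trace, extract from (QS1) the matrix identities needed to invoke Lemma~\ref{idemlemm}, and apply that lemma inside $M_d(\Complex)$.

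First I would write $\psi = \sum_{i=1}^d \lambda_i\, e_i \otimes e_i$ and set $D := \diag(\lambda_1, \ldots, \lambda_d)$, which is self-adjoint, positive and invertible. A direct computation yields $\psi^*(A \otimes B)\psi = \Tr(D A D B^T)$. Since transposition preserves positivity of self-adjoint operators and $DAD \geq 0$ whenever $A \geq 0$, this trace vanishes iff $DAD\, B^T = 0$; this is the bridge that turns each instance of (QS1) into a matrix equation.

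Next I fix $\vx \in \RA{}$, $i \in [k]$, $y \in B$ and feed Bob the input $x := \vx_i$. Summing (QS1) over all $\vy$ with $\vy_i \neq y$, and using that $\{\EExy\}_\vy$ is a POVM, yields $D(I - \EExiy) D\, \FFxy^T = 0$, i.e.\ $D\FFxy^T = \EExiy D \FFxy^T$. Complementarily, for each $y' \neq y$ the quantity $\psi^*(\EExiy \otimes \mathcal{F}_{x,y'})\psi$ expands into a sum of (QS1)-terms (every $\vy$ with $\vy_i = y$ certainly has $\vy_i \neq y'$), each of which vanishes; summing over $y' \neq y$ and using that $\{\FFxy\}_y$ is a POVM gives $D\EExiy D(I - \FFxy^T) = 0$, i.e.\ $\EExiy D = \EExiy D\FFxy^T$. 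Concatenating, $\EExiy D = \EExiy D \FFxy^T = D\FFxy^T$, which is precisely the pattern $ad = adb = db$ with $a := \EExiy$, $b := \FFxy^T$, $d := D$.

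Finally I would apply Lemma~\ref{idemlemm} in the C*-algebra $M_d(\Complex)$: the three operators are self-adjoint (the transpose of a self-adjoint matrix is self-adjoint) and $D$ is invertible, so the lemma delivers that both $\EExiy$ and $\FFxy^T$ are projectors and that they coincide. This handles the case $x = \vx_i$; for arbitrary $\vx, i, y$ the projector property of $\EExiy$ follows by choosing Bob's input to be $\vx_i$, and $\FFxy$ is a projector because transposition preserves self-adjoint idempotence. The hard part will be the bookkeeping in the middle step: both halves of the chain $ad = adb = db$ must be extracted from (QS1) by summing along two different axes and invoking the POVM completeness relations of Alice and Bob separately, in the correct order; once those two matrix identities are in hand, Lemma~\ref{idemlemm} closes the argument mechanically.
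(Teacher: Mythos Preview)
Your argument is correct and follows essentially the same route as the paper: translate the bilinear form via the diagonal matrix $D$, use that $\Tr(PQ)=0$ forces $PQ=0$ for positive semidefinite $P,Q$, derive the chain $\EExiy D = \EExiy D\,\FFxyT = D\,\FFxyT$ from (QS1) together with the POVM completeness relations, and finish with Lemma~\ref{idemlemm}. The only cosmetic difference is that the paper first isolates the single orthogonality relation $\EExiy D\,\FFxypT = \Ze$ for $y\neq y'$ and then sums against $\sum_y A_y = I = \sum_y B_y$, whereas you sum inside the trace first and apply the PSD trace-vanishing argument once to each aggregated operator; the content is the same.
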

\begin{proof}
We write $\psi$ as $\sum_{i=1}^d \lambda_i e_i \otimes e_i$, where $\lambda_i > 0$. The corresponding $d \times d$ diagonal matrix $D = \diag\{\lambda_i\}$ is full rank, and hence invertible, and $D^* = D$.
Note that $\psi = \vct{D}$, the vectorization of $D$. Using the standard equations 
$(A \otimes B)\vct{D} = \vct{ADB^{T}}$ and $\vct{A}^*\vct{B} = \Tr(AB)$, we have
\[ \psi^*(\EExy \otimes \FFxy)\psi = 0 \iff \Tr(D\EExy D\FFxyT) = 0 \iff \Tr(\EExy D\FFxyT D) = 0 . \]
By Proposition~\ref{trprop}, $\Tr(\EExy D\FFxyT D) = 0 \iff \EExy D\FFxyT D = \Ze$, and since $D$ is invertible, this is equivalent to $\EExy D\FFxyT = \Ze$. By the quantum strategy condition (QS1),
$\EExy D \FFxyT = \Ze$ when $x = \vx_i$ and $y \neq \vy_i$. This in turn implies that $\EExiy D \FFxypT = \Ze$
when $x = \vx_i$ and $y \neq y'$.

Now fix $\vx$ and $x = \vx_i$. Let $A_y := \EExiy$, and $B_y := \FFxyT$. We have $\sum_y A_y = I = \sum_y B_y$, and $A_y D B_{y'} = \Ze$ when $y \neq y'$. Hence
$A_y D \; = \; \sum_{y'} A_y D B_{y'} \; = \; A_y D B_y \; = \; \sum_{y'} A_{y'} D B_y \; = \; D B_y$. 
We can now apply Lemma~\ref{idemlemm}, taking $a = A_y$, $d = D$, and $b = B_y$, to conclude that 
$\EExiy$ and $\FFxy$ are projectors, and moreover that they commute with $D^2$. 
We can use the last part of
Lemma~\ref{idemlemm} to conclude that $\EExiy = \FFxyT$ whenever $x = \vx_i$.
\end{proof}

Finally, we show that the state can be chosen to be maximally entangled.
\begin{lemma}
\label{meslemm}
Let $(\psi', \{ \EE_{\vx} \}, \{ \FFx \})$ be a quantum perfect strategy where $\psi' = \sum_{i=1}^d \lambda_i e_i \otimes e_i$ with $\lambda_i > 0$ for all $i$. Then $(\psi, \{ \EE_{\vx} \}, \{ \FFx \})$ is a quantum perfect strategy, where $\psi = 1/\sqrt{d} \sum_{i=1}^d e_i \otimes e_i$ is the maximally entangled state.
\end{lemma}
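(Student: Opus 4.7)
The plan is to verify the winning conditions (QS1) and (QS2) directly for $\psi = \tfrac{1}{\sqrt{d}}\sum_{i=1}^d e_i \otimes e_i$, reusing the very POVMs from the original strategy. Using the vectorization identities already employed in the proof of Lemma~\ref{pvmtlemm}, the probabilities under $\psi$ take the form $\tfrac{1}{d}\Tr(MN^T)$, while those under $\psi'$ take the form $\Tr(DMDN^T)$ with $D = \diag(\lambda_1,\ldots,\lambda_d)$. So (QS2) for $\psi$ reduces to $\Tr(\EExy) = 0$ whenever $\vy \notin \RB{}$, and (QS1) reduces to $\Tr(\EExy \FFxyT) = 0$ whenever $x = \vx_i$ and $y \neq \vy_i$.

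Condition (QS2) falls out almost immediately: its counterpart for $\psi'$ gives $\Tr(D^2 \EExy) = 0$, and since $\EExy$ is positive semidefinite and $D^2$ is positive and invertible, Proposition~\ref{trprop} forces $\EExy = 0$, whence $\Tr(\EExy) = 0$. This step does not use Lemma~\ref{pvmtlemm}. For (QS1) I would instead exploit that lemma: fix $x = \vx_i$ and $y \neq \vy_i$, so the lemma gives $\FFxyT = \EExiy$, which is a projector. The family $\{\EExiy\}_y$ sums to $I$ (because $\{\EExy\}_{\vy}$ is a POVM) and consists of projectors, so it is a PVM and its members are pairwise orthogonal; in particular $\EExiy \, \mathcal{E}^{i}_{\vx,\vy_i} = 0$. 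Since $\EExy$ is one of the summands making up $\mathcal{E}^{i}_{\vx,\vy_i}$, we have $\EExy \leq \mathcal{E}^{i}_{\vx,\vy_i}$, and the sandwich $0 \leq \EExiy \EExy \EExiy \leq \EExiy \, \mathcal{E}^{i}_{\vx,\vy_i} \, \EExiy = 0$ forces $\EExiy \EExy \EExiy = 0$. By cyclicity of trace and idempotence of $\EExiy$, this yields $\Tr(\EExy \FFxyT) = \Tr(\EExiy \EExy \EExiy) = 0$, as required.

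The main subtlety is that the fine-grained POVM elements $\EExy$ are not themselves projectors — only the coarse-grained sums $\EExiy$ are — so one cannot directly appeal to orthogonality of $\EExy$ and $\FFxyT$. The domination $\EExy \leq \mathcal{E}^{i}_{\vx,\vy_i}$ combined with the PVM orthogonality of $\{\EExiy\}_y$ supplied by Lemma~\ref{pvmtlemm} is precisely what bridges this gap, and once these two facts are in hand the verification of both winning conditions for $\psi$ is essentially mechanical.
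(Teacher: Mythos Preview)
Your argument is correct. Both winning conditions are verified, and the sandwich inequality together with the PVM orthogonality of $\{\EExiy\}_y$ does the job for (QS1).

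The paper proceeds differently for (QS1). From (QS1) for $\psi'$ one obtains $\EExy D' \FFxyT = \Ze$ (via Proposition~\ref{trprop} and invertibility of $D'$, exactly as in the proof of Lemma~\ref{pvmtlemm}). The paper then uses that $\FFxyT$ commutes with $D'$ --- a fact established inside the proof of Lemma~\ref{pvmtlemm} via Lemma~\ref{idemlemm} --- to slide $D'$ past $\FFxyT$ and cancel it, giving $\EExy \FFxyT = \Ze$ directly. This immediately yields (QS1) for any full-rank diagonal state, in particular the maximally entangled one. Your route instead uses only the \emph{stated} conclusions of Lemma~\ref{pvmtlemm} (projectivity and the identification $\FFxyT = \EExiy$), exploiting the PVM structure and the domination $\EExy \leq \mathcal{E}^{i}_{\vx,\vy_i}$; this is slightly longer but avoids reaching back into the proof of Lemma~\ref{pvmtlemm} for the commutation with $D'$. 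Both arguments in fact establish the stronger conclusion $\EExy \FFxyT = \Ze$ (yours does too, since $\EExiy \EExy \EExiy = \Ze$ forces $\sqrt{\EExy}\,\EExiy = \Ze$), though you only extract the trace condition, which suffices.
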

\begin{proof}
Let $D'$ be the diagonal matrix associated with $\psi'$. Using Lemma~\ref{pvmtlemm}, 
\[ \psi'^*(\EExy \otimes \FFxy)\psi' = 0 \iff \EExy D' \FFxyT = \Ze \iff \EExy \FFxyT D' = \Ze \iff \EExy \FFxyT = \Ze , \]
since $D'$ commutes with $\FFxyT$ and is invertible. 
Similarly,
\[ \psi'^*(\EExy \otimes I)\psi' = 0 \iff \EExy D'  = \Ze  \iff \EExy = \Ze . \]
These conditions will be preserved by any state $\psi$ whose diagonal matrix is full rank and commutes with the matrices $\FFxy$. This holds in particular for the maximally entangled state, whose diagonal matrix has the form $\frac{1}{\sqrt{d}} I_d$.
\end{proof}

We can now combine Lemmas~\ref{fullranklemm}, \ref{pvmtlemm} and \ref{meslemm} to obtain the following result:
\begin{theorem}
\label{specformthm}
The  existence of a quantum perfect strategy implies the existence of a strategy $(\psi, \{ \EE_{\vx} \}, \{ \FFx \})$ with the following properties:
\begin{itemize}
\item The POVMs $\EEix$ and $\FFx$ are projective.
\item The state $\psi$ is a maximally entangled state $\psi = 1/\sqrt{d} \sum_{i=1}^d e_i \otimes e_i$.
\item If $x = \vx_i$ then $\EExiy = \FFxyT$.
\item If $\vx \in \RA{}$ and $\vy \not\in \RB{}$, then $\EExy = \Ze$.
\end{itemize}
\end{theorem}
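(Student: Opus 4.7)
The plan is to compose Lemmas \ref{fullranklemm}, \ref{pvmtlemm}, and \ref{meslemm} in sequence and then extract the fourth bullet from condition (QS2) directly. Starting from any quantum perfect strategy, Lemma \ref{fullranklemm} produces a perfect strategy whose state $\psi' = \sum_{i=1}^d \lambda_i e_i \otimes e_i$ has full Schmidt rank.

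To this modified strategy I would then apply Lemma \ref{pvmtlemm}, which at once delivers the projectivity of the marginalized POVMs $\EEix$ and of the $\FFx$ (the first bullet) together with the identity $\EExiy = \FFxyT$ whenever $x = \vx_i$ (the third bullet). Next, Lemma \ref{meslemm} replaces $\psi'$ by the maximally entangled state $\psi = \frac{1}{\sqrt{d}}\sum_{i=1}^d e_i \otimes e_i$ while leaving the measurement families untouched, yielding the second bullet with the first and third still in force.

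Finally, for the fourth bullet I would reuse the vectorization calculation from the proof of Lemma \ref{pvmtlemm}: for $\vx \in \RA{}$ and $\vy \notin \RB{}$, condition (QS2) gives $\psi^*(\EExy \otimes I)\psi = 0$, which translates to $\Tr(\EExy D^2) = 0$, where $D$ is the (invertible) diagonal form of the state; Proposition \ref{trprop} then forces $\EExy = \Ze$. I do not expect any genuine obstacle, since the three lemmas were engineered in exactly this order and each preserves the structure established by the previous step. The only point worth confirming explicitly is the invariance of the measurement data under the final state swap, which is immediate from the statement of Lemma \ref{meslemm}.
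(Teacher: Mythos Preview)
Your proposal is correct and follows essentially the same route as the paper, which simply states that the result is obtained by combining Lemmas~\ref{fullranklemm}, \ref{pvmtlemm}, and \ref{meslemm}. The only cosmetic difference is that the paper absorbs your separate derivation of the fourth bullet into the proof of Lemma~\ref{meslemm}, where the equivalence $\psi'^*(\EExy \otimes I)\psi' = 0 \iff \EExy = \Ze$ is already established en route to showing that the change of state preserves (QS2).
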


It is worth noting that the procedure for obtaining the strategy in this special form has three steps:
\begin{itemize}
\item In Step 1, the state and strategies are projected down to the subspace corresponding to the support of the Schmidt decomposition of the state. This step reduces the dimension of the Hilbert space, and preserves the probabilities for the strategy exactly.
\item Step 2 does not change the strategy at all, but shows that it must already have strong properties.
\item Step 3 changes the state but not the measurements. In general, the probabilities for the strategy will be changed, but the possibilities are preserved exactly.
\end{itemize}
Thus in passing to the special form, the dimension is reduced; the process by which we obtain projective measurements is not at all akin to dilation.

This theorem shows that all the information determining the strategy is in Alice's operators. Moreover, Alice's operators must be chosen non-contextually, so that $\EExiy$ is independent of the context $\vx$.
This means that we can define projectors $\Pxy := \EExiy$ whenever $x = \vx_i$. If $\vx_i = x = \vx'_j$, then we have $\EExiy = \FFxyT = \EExjy$, so $\Pxy$ is well-defined.

Now, recall the notion of joint measurability: 
a family of POVMs $\{ A^1_{y} \}_{y \in Y_1}$, \ldots, $\{A^k_{y} \}_{y \in Y_k}$
is said to be jointly measurable if there is a POVM $\{B_{\vy}\}_{\vy \in Y_1 \times \cdots \times Y_k}$
such that for all $i$, $A^{i}_{y} = \sum_{\vy_i = y} B_{\vy}$.
The following result is standard \cite{heinosaari2008notes}.
\begin{proposition}
\label{commprop}
A family of projective measurements is jointly measurable by a POVM if and only if they pairwise commute,
and in this case the POVM is the product of the family, and hence projective.
\end{proposition}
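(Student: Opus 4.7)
The plan is to prove the two implications separately, along with the characterisation of the joint POVM as the product.

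For the easy direction, suppose the projective measurements $\{A^i_y\}_{y \in Y_i}$ pairwise commute. I would simply define $B_{\vy} := A^1_{y_1} A^2_{y_2} \cdots A^k_{y_k}$. Since the $A^i_y$ pairwise commute, this product is independent of order and is a projector (a product of commuting projectors is itself a projector). Summing,
\[
\sum_{\vy} B_{\vy} \;=\; \prod_{i=1}^{k} \Bigl(\sum_{y \in Y_i} A^i_y\Bigr) \;=\; I,
\]
using that each $\{A^i_y\}_y$ sums to the identity. The marginalisation $\sum_{\vy_i = y} B_{\vy} = A^i_y \prod_{j \neq i} \sum_{y_j} A^j_{y_j} = A^i_y$ is then immediate.

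For the hard direction, suppose a POVM $\{B_{\vy}\}$ jointly measures the family. The key intermediate claim is that every $B_{\vy}$ commutes with every $A^i_y$. Fix $i$ and put $P := A^i_{\vy_i}$. Since $P = \sum_{\vy': \vy'_i = \vy_i} B_{\vy'}$ with each $B_{\vy'} \geq 0$, we have $0 \leq B_{\vy} \leq P$. Because $P$ is a projector, $(I-P)B_{\vy}(I-P) \leq (I-P)P(I-P) = 0$, and positivity forces $(I-P)B_{\vy}(I-P) = 0$, whence $B_{\vy}(I-P) = 0$ (using that $B_{\vy}^{1/2}$ exists) and hence $B_{\vy} = PB_{\vy} = B_{\vy}P$. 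For $y \neq \vy_i$, orthogonality of the PVM gives $A^i_y P = 0$, so $A^i_y B_{\vy} = A^i_y P B_{\vy} = 0 = B_{\vy} A^i_y$. Thus $B_{\vy}$ commutes with all $A^i_y$. Pairwise commutativity of the $A$'s now follows at once by linearity: $A^i_y A^j_{y'} = A^i_y \sum_{\vy'_j = y'} B_{\vy'} = \sum_{\vy'_j = y'} B_{\vy'} A^i_y = A^j_{y'} A^i_y$ for $i \neq j$.

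Finally, to identify the joint POVM with the product $C_{\vy} := A^1_{y_1}\cdots A^k_{y_k}$, iterate the identity $B_{\vy} = A^i_{\vy_i} B_{\vy} A^i_{\vy_i}$ across all $i$ to obtain $B_{\vy} = C_{\vy} B_{\vy} C_{\vy} \leq C_{\vy}$ (the last step uses that $C_{\vy}$ is a projector and $B_{\vy} \leq I$ on its range). Since $\sum_{\vy} B_{\vy} = I = \sum_{\vy} C_{\vy}$ and each term satisfies $C_{\vy} - B_{\vy} \geq 0$, a sum of positive operators being zero forces each term to vanish, giving $B_{\vy} = C_{\vy}$. The main obstacle is this last uniqueness argument; the subtle point is going from the $k$ separate inequalities $B_{\vy} \leq A^i_{\vy_i}$ to the single inequality $B_{\vy} \leq C_{\vy}$, which requires exploiting the commutation established in the previous step before applying the termwise-equality-from-summwise-equality principle for positive operators.
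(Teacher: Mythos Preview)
The paper does not actually prove this proposition: it is stated as a standard fact with a citation to \cite{heinosaari2008notes}, and no argument is given. So there is no ``paper's own proof'' to compare against. Your task then reduces to whether your argument stands on its own, and it does.

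Your easy direction is fine. In the hard direction, the key step --- from $0 \leq B_{\vy} \leq P$ with $P$ a projector to $B_{\vy}(I-P)=0$ --- is correct: conjugating $B_{\vy} \leq P$ by $I-P$ gives $(I-P)B_{\vy}(I-P) \leq 0$, and combined with $B_{\vy} \geq 0$ and the square-root trick you cite, this yields $B_{\vy} = PB_{\vy} = B_{\vy}P$. The deduction of pairwise commutativity of the $A^i_y$ from commutation of each $B_{\vy'}$ with each $A^i_y$ is then a clean linearity argument. For uniqueness, note you can shortcut slightly: from $B_{\vy} = A^i_{\vy_i} B_{\vy}$ for each $i$ you get directly $B_{\vy} = C_{\vy} B_{\vy}$ (no conjugation needed), and since $C_{\vy}$ is a projector and $B_{\vy} \leq I$, $B_{\vy} = C_{\vy} B_{\vy} C_{\vy} \leq C_{\vy}$. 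Your final ``sum of positives equals zero'' argument is valid (e.g.\ by taking traces). One small presentational point: in the line $A^i_y A^j_{y'} = A^i_y \sum_{\vy'_j = y'} B_{\vy'}$ you reuse $\vy'$ as a bound tuple variable while $y'$ is a free scalar; renaming the summation index would avoid any ambiguity.
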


For each $\vx \in \RA{}$,
the projective measurements $\{ P_{\vx_i} \}$ defined above are jointly measured by $\EEx$.
Thus $\EEx$ is the projective measurement
$\Pvx = \{ \Pvxy \}_{\vy}$ given by $\Pvxy := P_{\vx_1, \vy_1} \cdots P_{\vx_k, \vy_k}$.

We shall now introduce the notion of \emph{quantum homomorphism} between relational structures $\As$ and $\Bs$. A quantum homomorphism is a family of projectors $\{ \Pxy \}_{x \in A, y \in B}$ in $\Projd$ for some $d$, satisfying the following conditions:
\begin{itemize}
\item (QH1) For all $x \in A$, $\sum_{y \in B} \Pxy \, = \, I$.
\item (QH2) For all $\vx \in \RA{}$,  $x = \vx_i$, $x' = \vx_j$, and $y, y' \in B$, $[\Pxy, \Pxyp] = \Ze$.
Thus we can define a projective measurement $\Pvx = \{ \Pvxy \}_{\vy}$, where $\Pvxy := P_{\vx_1, \vy_1} \cdots P_{\vx_k, \vy_k}$.
\item (QH3) If $\vx \in \RA{}$ and $\vy \not\in \RB{}$, then $\Pvxy = \Ze$.
\end{itemize}
Note that (QH1) implies that for any $x$, 
$\Pxy\Pxnyp = \Ze$ whenever $y \neq y'$.

We write $\As \arq \Bs$ for the existence of  a quantum homomorphism from $\As$ to $\Bs$.

\begin{theorem}
\label{qhthm}
For finite structures $\As$, $\Bs$, the following are equivalent:
\begin{enumerate}
\item There is a quantum perfect strategy for the homomorphism game from $\As$ to $\Bs$.
\item  $\As \arq \Bs$.
\end{enumerate}
\end{theorem}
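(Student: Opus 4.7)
For (1) $\Rightarrow$ (2), I plan to start from a quantum perfect strategy and invoke Theorem~\ref{specformthm} to replace it, without loss of generality, by one in special form: projective POVMs $\EEx$, $\FFx$, a maximally entangled state $\psi$, with $\EExiy = \FFxyT$ whenever $x = \vx_i$, and $\EExy = \Ze$ when $\vx \in \RA{}$ but $\vy \not\in \RB{}$. Using Bob's POVMs, I would define $\Pxy := \FFxyT$ uniformly for every $x \in A$; by Lemma~\ref{pvmtlemm} this is a projector, and by the special-form identity it coincides with $\EExiy$ whenever $x$ appears as some $\vx_i$. This settles the potential ambiguity in defining $\Pxy$ via $\EExiy$ (in particular for $x$ not appearing in any tuple).

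Next I would verify the three axioms. Condition (QH1) is immediate: $\sum_y \Pxy = (\sum_y \FFxy)^T = I$. For (QH2), fix $\vx \in \RA{}$; the family $\{P_{\vx_i,y}\}_i = \{\EExiy\}_i$ is jointly measured by $\EEx$, so Proposition~\ref{commprop} forces pairwise commutativity of these projectors and identifies the joint measurement with their product $\Pvxy = P_{\vx_1,\vy_1}\cdots P_{\vx_k,\vy_k} = \EExy$. Finally (QH3) is exactly the last clause of Theorem~\ref{specformthm}: $\Pvxy = \EExy = \Ze$ when $\vy \not\in \RB{}$.

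For (2) $\Rightarrow$ (1), given a quantum homomorphism $\{\Pxy\}$ in $\Projd$, I would build the strategy by taking $\psi := \tfrac{1}{\sqrt{d}} \sum_i e_i \otimes e_i$, Alice's POVM $\EExy := \Pvxy$ (a well-defined projective measurement by (QH1), (QH2) and Proposition~\ref{commprop}), and Bob's POVM $\FFxy := \Pxy^T$ (projective by (QH1) and the consequence $\Pxy\Pxnyp = \Ze$ for $y\neq y'$). Using the identities $(A\otimes B)\vct{I_d} = \vct{AB^T}$ and $\vct{A}^*\vct{B} = \Tr(A^*B)$, I would compute
\[
\psi^*(\EExy \otimes \FFxy)\psi \;=\; \tfrac{1}{d}\Tr(\Pvxy \Pxy),
\qquad
\psi^*(\EExy \otimes I)\psi \;=\; \tfrac{1}{d}\Tr(\Pvxy).
\]
Condition (QS2) follows since a projector of trace zero must vanish, combined with (QH3). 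For (QS1), assume $x = \vx_i$ and $y \neq \vy_i$; by (QH2) every factor $P_{\vx_j,\vy_j}$ of $\Pvxy$ commutes with $\Pxy = P_{\vx_i,y}$, so I can slide $\Pxy$ next to the factor $P_{\vx_i,\vy_i}$ and use $P_{\vx_i,\vy_i}P_{\vx_i,y} = \Ze$ (the consequence of (QH1) recorded after its statement) to conclude $\Pvxy \Pxy = \Ze$, hence the trace vanishes.

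The conceptually hardest point is really already discharged by Theorem~\ref{specformthm}: that step establishes that Alice's operators factor non-contextually and agree (up to transpose) with Bob's, which is what allows a single family $\{\Pxy\}$ to play both roles. What remains is genuinely routine algebra --- bookkeeping with the vec identities in the $(\Leftarrow)$ direction, and the commutation/product identification in the $(\Rightarrow)$ direction --- so I do not anticipate any further obstacle. The extension from a single $k$-ary relation to an arbitrary signature $\sigma$ is also routine, since the conditions (QH1)--(QH3), (QS1)--(QS2) are imposed tuple-by-tuple and relation-by-relation.
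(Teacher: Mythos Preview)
Your proof is correct and takes essentially the same route as the paper: invoke Theorem~\ref{specformthm} and the ensuing non-contextuality discussion for (1)$\Rightarrow$(2), and build the strategy $\EExy := \Pvxy$, $\FFxy := \Pxy^{T}$ on the maximally entangled state for (2)$\Rightarrow$(1). You spell out the vec/trace computation and the commutation-and-slide argument for (QS1) that the paper leaves as ``straightforward to verify''; the one tiny overreach is that Lemma~\ref{pvmtlemm} only forces $\FFxy$ to be a projector when $x$ occurs in some tuple, but for isolated $x$ the conditions (QH2)--(QH3) are vacuous and any PVM will do, so this is harmless.
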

\begin{proof}
The implication from (1) to (2) follows directly from Theorem~\ref{specformthm} and the subsequent discussion.
For the converse, given a quantum homomorphism $\{ \Pxy \}_{x \in A, y \in B}$, we can define $\EExy := \Pvxy$, $\FFxy := \Pxy^{T}$, and use the maximally entangled state to obtain a quantum perfect strategy. It is straightforward to verify that the homomorphism conditions (QH1)--(QH3) imply the strategy conditions (QS1) and (QS2).
\end{proof}

As a final remark, although we have focussed on a single relation to simplify the notation, our results go through for arbitrary relational signatures. Note that the general form of condition (QH2) is that $\Pxy$ and $\Pxyp$ must commute whenever $x$ and $x'$ are adjacent in the Gaifman graph of $\As$ --- that is, they both occur in some tuple of some relation.

\section{From quantum homomorphisms to the quantum monad}

We now show how to characterize quantum homomorphisms as the Kleisli morphisms of a monad $\Qd$ on the category of relational structures. This monad is \emph{graded} \cite{milius2015generic} by the dimension of the Hilbert space used as the quantum resource.

The monad will be defined on the category $\Rs$ of all $\sg$-structures, since $\Qd \As$ will always be infinite, even if $\As$ is finite. For the underlying universes of the structures, the construction  can be seen as a quantum variant of the discrete distribution monad \cite{jacobs2010convexity}, widely used in coalgebra and semantics. It is well known that the distribution monad can be defined over any commutative semiring, with the non-negative reals being used for the standard case of probabilities \cite{jacobs2010convexity,AbramskyBrandenburger}. Here we shall use the projectors $\Projd$ with $d$ ranging over the positive integers. For each $d$, $\Projd$ is a partial commutative semiring, since we can only add projectors if they are orthogonal, and only multiply them if they commute. We also have the graded multiplication given by the tensor product: if $P \in \Projd$ and $Q \in \Projdp$, then $P \otimes Q \in \Projddp$.

We fix a relational signature $\sigma$.
For each positive integer $d$ and $\sg$-structure $\As$, we shall define a $\sg$-structure $\Qd \As$.
The universe of this structure $\Qd A$ is the set of all functions $p : A \rarr \Projd$ satisfying the 
normalization condition: $\sum_{x \in A} p(x) = I$.
Note that the normalization condition implies that the projectors $\{ p(x) \}_{x \in A}$ are pairwise orthogonal.
Since we are in finite dimension $d$, this in turn implies that $p$ has finite support: $p(x) = \Ze$ for all but finitely many $x$.
We can think of $\Qd A$ as the projector-valued distributions on $A$ in dimension $d$.

For each relation $R$ of arity $k$ in $\sg$, we define $\RQd$ to be the set of all tuples $(p_1,\ldots ,p_k)$ such that:
\begin{itemize}
\item (QR1) For all $i, j \in [k]$, $x, x' \in A$: $[p_i(x), p_j(x')] = \Ze$.
\item (QR2) For all $\vx \in A^k$, if $\vx \not\in \RA{}$, then $\vp(\vx) = \Ze$, where $\vp(\vx) := p_1(x_1) \cdots p_k(x_k)$.
\end{itemize}
Note that the first condition implies that the product of projectors in the second is a well-defined projector.

\begin{proposition}
\label{qhKlprop}
Let $\As$ and $\Bs$ be finite $\sg$-structures.
There is a bijective correspondence between quantum homomorphisms $\{ \Pxy \}_{x \in A,y \in B}$ from $\As$ to $\Bs$ in dimension $d$ and standard homomorphisms $h : \As \rarr \Qd \Bs$.
\end{proposition}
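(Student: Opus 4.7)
The proposition is essentially an unpacking of definitions once one sees the right correspondence: the bijection is given by the currying/uncurrying isomorphism
\[
\{ \Pxy \}_{x \in A, y \in B} \;\longleftrightarrow\; h, \qquad h(x)(y) \;:=\; \Pxy.
\]
At the set-theoretic level this is manifestly a bijection between families of projectors in $\Projd$ indexed by $A \times B$ and functions $h \colon A \rarr (B \rarr \Projd)$. The content of the proposition is that, under this bijection, the three conditions (QH1)--(QH3) defining a quantum homomorphism match up exactly with the requirements that (i) $h$ takes values in $\Qd B$, and (ii) $h$ is a $\sg$-homomorphism into $\Qd\Bs$.

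The plan is to verify the two directions in parallel by ticking off the conditions one by one. First, (QH1), which asserts $\sum_{y \in B} \Pxy = I$ for each $x \in A$, is by definition the normalization condition that $h(x) \colon B \rarr \Projd$ lies in the universe $\Qd B$ of $\Qd\Bs$. So the translation between families $\{\Pxy\}$ satisfying (QH1) and functions $h \colon A \rarr \Qd B$ is immediate. Next, fix a relation $R$ of arity $k$ and a tuple $\vx \in \RA{}$. The homomorphism requirement on $h$ is that $(h(x_1), \ldots, h(x_k)) \in \RQdB$, which by the definition of $\RQdB$ amounts to exactly two clauses: (QR1) $[h(x_i)(y), h(x_j)(y')] = \Ze$ for all $i,j \in [k]$ and $y,y' \in B$; and (QR2) $h(x_1)(y_1)\cdots h(x_k)(y_k) = \Ze$ for all $\vy \in B^k$ with $\vy \not\in \RB{}$. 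Substituting $h(x)(y) = \Pxy$, these become precisely (QH2) and (QH3) respectively.

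The only mildly subtle point, which I would flag explicitly, is that (QH2) and (QR1) are quantified in formally different ways---(QH2) ranges over $\vx \in \RA{}$ and over positions $i,j \in [k]$ picking out the relevant $x,x'$, while (QR1) ranges over $i,j \in [k]$ and $y, y' \in B$ for a given tuple of functions in $\Qd B$. Once one reads (QH2) as ``for every $\vx \in \RA{}$, the family $\{P_{\vx_i, y}\}_{i \in [k],\, y \in B}$ is pairwise commuting'', this is literally (QR1) applied to the tuple $(h(x_1), \ldots, h(x_k))$. Similarly (QH3) and (QR2) match after substitution. Since the homomorphism condition need only be checked on tuples $\vx \in \RA{}$, and since (QH2) and (QH3) are only imposed on such tuples, the quantifiers line up.

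I do not anticipate any real obstacle: there are no analytic or algebraic ingredients beyond the fact that $\Qd B$ is the set of $\Projd$-valued normalized ``distributions'' on $B$. The proof is therefore essentially a table of matching conditions, concluded by observing that the assignment $\{\Pxy\} \mapsto h$ with $h(x)(y) := \Pxy$ and its inverse $h \mapsto \{h(x)(y)\}_{x,y}$ are mutually inverse as set-level operations, and that we have just shown they restrict to a bijection between the two classes of objects described in the statement.
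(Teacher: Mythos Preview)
Your proposal is correct and follows essentially the same approach as the paper: the bijection is the currying map $h(x)(y) := \Pxy$, and the proof consists of matching (QH1) with normalization, (QH2) with (QR1), and (QH3) with (QR2). Your explicit discussion of the quantifier alignment between (QH2) and (QR1) is a welcome addition, since the paper simply asserts that (QH2) implies (QR1) and conversely without spelling this out.
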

\begin{proof}
Given a quantum homomorphism $\{ \Pxy \}$, define $h : \As \rarr \Qd \Bs$ by $h(x) = p$, where $p(y) := \Pxy$.
(QH1) implies normalization. Given $\vx \in \RA{}$, we have to show that $ \vp = h(\vx) \in \RQd$. (QH2) implies that (QR1) is satisfied for $p_i(y)$, $p_j(y')$, where $p_i = h(x_i)$, $p_j = h(x_j)$. Similarly, (QH3) implies (QR2).

For the converse, given $h : \As \rarr \Qd B$, define $\Pxy := h(x)(y)$. Again, normalization implies (QH1), (QR1) implies (QH2), and (QR2) implies (QH3).
\end{proof}
This correspondence is analogous to the familiar one between relations and set-valued functions, which shows that the category of relations is the Kleisli category of the powerset monad on $\Set$.

Now we show that $\Qd$ extends to a functor on $\CS$.
Given a homomorphism $h : \As \rarr \Bs$, we define $\Qd h: \Qd \As \rarr \Qd \Bs$ by $\Qd h(p)(y) := \sum_{h(x) = y} p(x)$. 
\begin{proposition}
$\Qd h$ is a well-defined homomorphism. Moreover, $\Qd$ is functorial: $\Qd g \circ \Qd h = \Qd (g \circ h)$ and $\Qd \id_A = \id_{\Qd A}$.
\end{proposition}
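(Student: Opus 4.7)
The plan is to check first that $\Qd h(p)$ lies in $\Qd B$, then that $\Qd h$ preserves each relation, and finally verify functoriality by a routine reindexing of sums.

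For well-definedness at the level of universes, fix $p \in \Qd A$. The projectors $\{p(x)\}_{x \in A}$ are pairwise orthogonal (as noted after the definition of $\Qd A$), so any subfamily indexed by $\{x : h(x) = y\}$ is still pairwise orthogonal and, since $p$ has finite support, its sum $\Qd h(p)(y) := \sum_{h(x)=y} p(x)$ is a well-defined projector in $\Projd$. For normalization, the disjoint decomposition $A = \bigsqcup_{y \in B} h^{-1}(y)$ (together with finite support) gives
\[ \sum_{y \in B} \Qd h(p)(y) \;=\; \sum_{y \in B} \sum_{h(x) = y} p(x) \;=\; \sum_{x \in A} p(x) \;=\; I . \]

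Next, I would verify that $\Qd h$ preserves each relation $R$ of arity $k$. Suppose $(p_1,\ldots,p_k) \in \RQd$; write $q_i := \Qd h(p_i)$. For (QR1), for any $i,j$ and $y,y' \in B$,
\[ [q_i(y), q_j(y')] \;=\; \Bigl[\sum_{h(x)=y} p_i(x),\, \sum_{h(x')=y'} p_j(x')\Bigr] \;=\; \sum_{h(x)=y}\sum_{h(x')=y'} [p_i(x), p_j(x')] \;=\; \Ze, \]
by bilinearity of the commutator and (QR1) for the $p_i$. This is the main place where the structure of $\RQd$ is used, but the verification is routine once commutativity is invoked. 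For (QR2), let $\vy \notin \RB{}$. Expanding,
\[ \vq(\vy) \;=\; q_1(y_1)\cdots q_k(y_k) \;=\; \sum_{\vx \in h^{-1}(\vy)} p_1(x_1)\cdots p_k(x_k) \;=\; \sum_{\vx \in h^{-1}(\vy)} \vp(\vx). \]
The key point — and the one genuine use of $h$ being a homomorphism — is that for any $\vx$ with $h(\vx) = \vy \notin \RB{}$ we must have $\vx \notin \RA{}$ (else $h(\vx) \in \RB{}$), and so $\vp(\vx) = \Ze$ by (QR2) for $(p_1,\ldots,p_k)$. Hence $\vq(\vy) = \Ze$, as required.

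Finally, functoriality is a straightforward reindexing. For $g : \Bs \rarr \Cs$ and $z \in C$,
\[ \Qd(g \circ h)(p)(z) \;=\; \sum_{g(h(x))=z} p(x) \;=\; \sum_{g(y)=z}\sum_{h(x)=y} p(x) \;=\; \sum_{g(y)=z} \Qd h(p)(y) \;=\; \Qd g(\Qd h(p))(z), \]
and $\Qd\id_A(p)(x) = \sum_{x' : x'=x} p(x') = p(x)$, so $\Qd \id_A = \id_{\Qd A}$. The only conceptual step is the preservation argument for (QR2); the remaining content is bookkeeping of sums of pairwise-orthogonal or pairwise-commuting projectors.
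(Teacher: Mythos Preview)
Your proof is correct and follows essentially the same approach as the paper's: both verify well-definedness via finite support and normalization, check (QR1) by bilinearity of the commutator, check (QR2) by expanding the product into a sum over $h^{-1}(\vy)$ and invoking the contrapositive of the homomorphism condition, and handle functoriality by the same reindexing used for the distribution monad. Your version simply spells out more detail (notably that the sum of pairwise-orthogonal projectors is a projector, and the explicit computation for $\Qd(g\circ h)$ and $\Qd\id_A$) where the paper is terse.
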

\begin{proof}
The finite support and normalization conditions ensure that $\Qd h$ is well-defined. 
Functoriality is proved exactly as for the distribution monad.
We verify that $\Qd h$ is a homomorphism. Suppose that $(p_1, \ldots , p_k) \in \RQd$.
By (QR1), this implies that $[p_i(x), p_j(x')] = \Ze$ for all $i, j \in [k]$, $x, x' \in A$. This implies that 
\[ [\Qd h(p_i)(y), \Qd h(p_j)(y')] \; = \; [\sum_{h(x) = y} p_i(x), \sum_{h(x') = y'} p_j(x')] \; = \; \Ze \]
so $h(\vp)$ satisfies (QR1). For (QR2), if $\vy \not\in \RB{}$, 
\[ h(\vp)(\vy) \; = \; \prod_{j=1}^k \sum_{h(\vx_j) \; = \; \vy_j} p_j(\vx_j) \; = \; \sum_{h(\vx) \; = \; \vy} \vp(\vx)  \; = \; \Ze,  \]
by (QR2) for $\vp \in \RQd$,
since $\vy \not\in \RB{}$ and $h(\vx) = \vy$ implies $\vx \not\in \RA{}$.
Thus $h(\vp) \in \RQdB$.
\end{proof}

The unit of the monad $\unA : \As \rarr \Qu \As$ sends $x \in A$ to the ``delta distribution'' $\delx \in \Qu A$, where $\delx(x) = I_1$, $\delx(x') = \Ze$ if $x \neq x'$. Verification that this is well-defined and yields a natural transformation is straightforward.

We also have the graded monad multiplication: $\muddA : \Qd \Qdp \As \rarr \Qddp \As$.
This is defined as follows: $\muddA(P)(x) := \sum_{p \in \Qdp \As} P(p) \otimes p(x)$.
We prove that this gives a well-defined natural transformation in the Appendix.
Thinking of $\otimes$ as the graded semiring multiplication on projectors, we can see the correspondence to the distribution monad. 

We recall that given a category $\CC$, the endofunctor category $[\CC, \CC]$ is monoidal; a monad on $\CC$ is a monoid in this category \cite{mac2013categories}.
Now let $(M, {\cdot}, 1)$ be a monoid, which we can view as a discrete category with a strict monoidal structure.
An \emph{$M$-graded monad} \cite{milius2015generic} on  $\CC$ is a lax monoidal functor from $M$ into $[\CC, \CC]$. 
Such a functor is given by the following data: an assignment $m \mapsto T_m$ of an endofunctor on $\CC$ to each element of $M$; 
a natural transformation $\eta : \Id \natarrow T_1$ (the graded unit); and a natural transformation $\mu^{m, m'} : T_m T_{m'} \natarrow T_{m \cdot m'}$ for all $m$, $m'$ (the graded multiplication). These are subject to coherence conditions, which generalize the usual monad equations. We refer to \cite{milius2015generic} for details.

In our case, we use the monoid $\Np$ of positive integers under multiplication.

\begin{theorem}
The triple $(\{ \Qd \}_{d}, \eta, \{ \mudd \}_{d,d'})$ 
is a $\Np$-graded monad on $\CS$.
\end{theorem}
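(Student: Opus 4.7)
The plan is to assemble the three pieces of data for an $\Np$-graded monad and then check the graded unit and associativity laws. Functoriality of each $\Qd$ was proved in the preceding proposition, and the appendix supplies well-definedness and naturality in $\As$ of $\muddA : \Qd \Qdp \As \rarr \Qddp \As$. So the remaining work is to verify naturality of $\eta$ together with the two coherence diagrams.

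First, for naturality of $\unA(x) = \delx$: given $h : \As \rarr \Bs$, $\Qu h(\delta_x)(y) = \sum_{h(z) = y} \delta_x(z)$, which equals $I_1$ when $h(x) = y$ and $\Ze$ otherwise, matching $\delta_{h(x)}(y)$. Well-definedness of $\delta_x$ as an element of $\Qu \As$ follows from $\sum_y \delta_x(y) = I_1$. Next, the graded unit laws. For the left law, evaluating $\mu^{1,d}_{\As} \circ \eta_{\Qd \As}$ at $p \in \Qd \As$ gives $\mu^{1,d}_{\As}(\delta_p)$, and $\mu^{1,d}_{\As}(\delta_p)(x) = \sum_{q \in \Qd \As} \delta_p(q) \otimes q(x) = I_1 \otimes p(x) = p(x)$, using that $I_1$ is the tensor unit on projectors. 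For the right law, $\Qd \eta_{\As}(p)(\delta_y) = \sum_{\eta_{\As}(z) = \delta_y} p(z) = p(y)$ by injectivity of $\unA$, and $\Qd \eta_{\As}(p)$ vanishes on non-delta elements of $\Qu \As$; hence $\mu^{d,1}_{\As}(\Qd \eta_{\As}(p))(x) = \sum_y p(y) \otimes \delta_y(x) = p(x) \otimes I_1 = p(x)$.

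For graded associativity I would verify that for every $P \in \Qd \Qdp \mathcal{Q}_{d''} \As$, both composites $\mu^{dd',d''}_{\As} \circ \mu^{d,d'}_{\mathcal{Q}_{d''} \As}$ and $\mu^{d,d'd''}_{\As} \circ \Qd \mu^{d',d''}_{\As}$ yield the same element of $\mathcal{Q}_{dd'd''} \As$. Expanding each composite pointwise at $x \in A$, both reduce to $\sum_{p, r} P(p) \otimes p(r) \otimes r(x)$, with $p$ ranging over $\Qdp \mathcal{Q}_{d''} \As$ and $r$ over $\mathcal{Q}_{d''} \As$; the two composites differ only in which of the two summations is carried out first. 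Equality then follows from associativity of the graded tensor product on projectors together with finite-support Fubini-style reindexing, mirroring the standard verification for the distribution monad with $\otimes$ on projectors in place of semiring multiplication.

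The main obstacle is bookkeeping: at each intermediate stage one must check that the partial sums and products of projectors are legitimately defined, \ie the entries summed are pairwise orthogonal and the entries multiplied pairwise commute. This propagates from (QR1) and the normalization condition: within any fixed $p \in \Qdp \mathcal{Q}_{d''} \As$ the values $p(r)$ are pairwise orthogonal; tensoring with a fixed projector preserves both orthogonality and commutativity; and the outer $P$ likewise has pairwise orthogonal values. Once these facts are in place, the rearrangements used in the associativity and unit-law arguments above are justified, and the coherence diagrams commute.
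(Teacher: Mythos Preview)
Your proposal is correct and follows essentially the same approach as the paper: both verify the unit laws by direct pointwise expansion of $\mu^{1,d}_{\As}\circ\eta_{\Qd\As}$ and $\mu^{d,1}_{\As}\circ\Qd\eta_{\As}$, and both verify associativity by reducing each composite to the common triple sum $\sum_{p,r} P(p)\otimes p(r)\otimes r(x)$. One small remark: your final paragraph about bookkeeping is unnecessary here, since in these coherence computations you are only comparing matrix-valued functions built from finite sums and \emph{tensor} products (not ordinary products), and these operations are always defined regardless of orthogonality or commutativity; the well-definedness of the codomain elements is already handled by the proposition on $\muddA$.
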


The proof of this result involves verifying a number of equations, and is fairly lengthy but straightforward.
We provide details in the Appendix.

We are particularly interested in the Kleisli category for this graded monad. The objects of this category are the same as those of $\CS$. A morphism from $\As$ to $\Bs$ is a homomorphism $h : \As \rarr \Qd \Bs$. 
By Proposition~\ref{qhKlprop}, we know that Kleisli morphisms correspond exactly to quantum homomorphisms.


The graded composition of Kleisli arrows $h : \As \rarr \Qd \Bs$ and $k : \Bs \rarr \Qdp \Cs$ 
is the arrow $k \Kcomp h : \As \rarr \Qddp \Cs$ given by $k \Kcomp h := \muddB \circ \Qd k \circ h$.
An explicit description can be calculated from the graded monad structure given above:
$(k \Kcomp h)(x)(z) \; = \; \sum_{y \in B} h(x)(y) \otimes k(y)(z)$.
If we write this in terms of the corresponding quantum homomorphisms $\{ \Pxy \}_{x \in A, y \in B}$, $\{ \Qyz \}_{y \in B, z \in C}$, we obtain  $\{ R_{x,z} \}_{x \in A, z \in C}$ given by the formula $R_{x,z} = \sum_{y \in B} \Pxy \otimes \Qyz$.
This generalizes the concrete definition for quantum graph homomorphisms from \cite{manvcinska2016quantum}.


\section{Quantum advantage via the quantum monad}

We shall now show how the quantum monad provides a unified framework for expressing quantum advantage in a wide range of information processing tasks. We shall show equivalences between:
\begin{itemize}
\item state-independent strong contextuality arguments
\item quantum advantage in constraint satisfaction
\item existence of quantum (but not classical) homomorphisms between relational structures.
\end{itemize}

\subsection{Classical correspondences}

We begin with the standard classical correspondence between constraint satisfaction problems and the existence of homomorphisms.
A CSP instance has the form $\CSP = (V, D, C)$ where $V$ is a set of variables, $D$ is a domain of 
values\footnote{One could have different domains associated with different variables. However, this is an inessential generalization, which we omit to keep notation simple.},  and $C$ is a set of constraints of the form $(\vx, r)$, where for some $k$, $\vx \in V^k$, and $r \subseteq D^k$.
We say that $c = (\vx, r)$ is a $k$-ary constraint. A \emph{solution} of the CSP is a  function $s : V \rarr D$ such that, for all $(\vx, r) \in C$, $s(\vx) \in r$, where $s(\vx) := (s(x_1), \ldots , s(x_k))$.

Given $\CSP = (V, D, C)$ we define two structures over the signature with a $k$-ary relation symbol $R_c$ for each $k$-ary constraint $c$. First, $\BK$ has as universe $D$, and for each $c = (\vx, r) \in C$, $\RcB = r$. Secondly, $\AK$ has universe $V$, and for each $c = (\vx, r) \in C$, $\RcA = \{ \vx \}$.
The following is immediate:
\begin{proposition}
There is a one-to-one correspondence between solutions for $\CSP$ and homomorphisms $\AK \rarr \BK$.
\end{proposition}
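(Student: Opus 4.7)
The plan is to observe that both sides of the claimed bijection are, at the level of data, just functions $V \rarr D$, and that the two defining conditions (being a solution vs.\ being a homomorphism) coincide on the nose once one unfolds the definitions of $\AK$ and $\BK$. So the correspondence will simply be the identity map on underlying functions, and the work is to check the two conditions match.

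First I would take a solution $s : V \rarr D$ and verify that, viewed as a function of type $V \rarr D$ between the universes of $\AK$ and $\BK$, it defines a homomorphism $\AK \rarr \BK$. For each constraint $c = (\vx, r) \in C$, the relation $\RcA$ by definition contains the single tuple $\vx$, so the only instance of the homomorphism condition to check for $R_c$ is $s(\vx) \in \RcB$. Since $\RcB = r$, this is precisely the solution requirement $s(\vx) \in r$.

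Conversely, given a homomorphism $h : \AK \rarr \BK$, I would view its underlying function $h : V \rarr D$ and check it is a solution. For every $(\vx, r) \in C$, applying the homomorphism condition to the tuple $\vx \in \RcA$ yields $h(\vx) \in \RcB = r$, which is the defining condition for a solution. The two passages are mutually inverse by construction (both are the identity on functions $V \rarr D$), yielding the bijection.

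There is no real obstacle here; the statement is a straightforward definitional unpacking, and the only thing to be careful about is the choice that $\RcA$ contains exactly the tuple $\vx$ (and nothing else), which is what makes the homomorphism condition equivalent to a single constraint check rather than a universally quantified one.
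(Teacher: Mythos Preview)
Your argument is correct and is exactly the definitional unpacking the paper has in mind; indeed, the paper simply declares the result ``immediate'' and gives no further proof. The only point to be careful about---that $\RcA$ is the singleton $\{\vx\}$, so the homomorphism condition reduces to a single check---is precisely the one you flagged.
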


There is also a converse to this result. Given $\sg$-structures $\As$ and $\Bs$, we can define the CSP $\KAB = (V, D, C)$, where $V = A$, $D = B$, and $C = \{ (\va, \RB{}) \mid  R \in \sg, \; \va \in \RA{} \}$.
\begin{proposition}
There is a one-to-one correspondence between homomorphisms $\As \rarr \Bs$ and solutions for $\KAB$.
\end{proposition}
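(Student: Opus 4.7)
The plan is to observe that both sides of the correspondence are just functions $A \to B$ subject to conditions that coincide on the nose, so the bijection is the identity map on the underlying set of functions. There is no real content beyond unpacking definitions.

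First, I would note that a solution of $\KAB = (V, D, C)$ is by definition a function $s : V \rarr D$, i.e.~a function $s : A \rarr B$, such that $s(\va) \in \RB{}$ for every constraint $(\va, \RB{}) \in C$. By construction of $\KAB$, the set $C$ consists of exactly those pairs $(\va, \RB{})$ such that $R \in \sg$ and $\va \in \RA{}$. Hence the solution condition reads: for all $R \in \sg$ and all $\va \in \RA{}$, $s(\va) \in \RB{}$.

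Next, I would recall that a homomorphism $f : \As \rarr \Bs$ is, by the definition recorded at the start of Section~\ref{qpstoqhsec}, a function $f : A \rarr B$ such that for each $R \in \sg$ and each $\va \in \RA{}$, $f(\va) \in \RB{}$ (with $f$ applied componentwise to tuples). This is verbatim the solution condition above.

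Therefore the assignment $f \mapsto f$ (viewed as a function from $A = V$ to $B = D$) is a well-defined map from homomorphisms $\As \rarr \Bs$ to solutions of $\KAB$, and its inverse is likewise the identity in the other direction. This gives the required bijective correspondence. The only step to be careful about is making sure that the componentwise action $f(\va) = (f(a_1), \ldots, f(a_k))$ used in the definition of homomorphism matches the componentwise action $s(\va)$ used in the definition of a CSP solution, which is immediate from how both are defined. There is no genuine obstacle here; the content of the proposition is simply that the CSP encoding $\KAB$ of a pair of structures is tautologically equivalent to the homomorphism problem.
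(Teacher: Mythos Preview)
Your proposal is correct and matches the paper's (implicit) approach: the proposition is stated without proof in the paper, as it is immediate from unwinding the definitions, which is exactly what you do. There is nothing to add.
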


We will now look at \emph{empirical models over measurement scenarios}, introduced in \cite{AbramskyBrandenburger} as a general setting for studying contextuality, in quantum mechanics and beyond, with non-locality as a special case.

A measurement scenario is a triple $(X, \MM, O)$, where $X$ is a set of measurement labels; $\MM$ is a family of subsets of $X$, where we think of $C \in \MM$ as a set of compatible measurements, or a context; and $O$ is a set of measurement outcomes. An empirical model $e : (X, \MM, O)$ for a scenario is a family $e  = \{ \ec \}_{C \in \MM}$ of probability distributions $\ec \in \Prob(O^C)$ on the joint outcomes of measuring all the variables in a context $C$. Such empirical models can arise from observational data (hence the name), or be generated by measuring a quantum state in contexts comprising jointly measurable observables. A hierarchy of notions of contextuality can be defined in this general setting \cite{AbramskyBrandenburger,AbramskyEtAl:ContextualityCohomologyAndParadox}. We shall be concerned with \emph{strong contextuality}. We say that $e : (X, \MM, O)$ is strongly contextual if there is no global assignment $g : X \rarr O$ such that, for all $C \in \MM$, $\ec(g |_{C}) > 0$. That is, there is no global assignment which is consistent with the model in the sense of yielding \emph{possible} outcomes (non-zero probability)  in all contexts. This form of contextuality is witnessed by the GHZ construction \cite{GHZ,Mermin90:QuantumMysteriesRevisited-SimplifiedGHZ1}, as well as by Kochen--Specker paradoxes \cite{kochen1975problem}, and post-quantum devices such as the PR box \cite{popescu1994quantum}.

Given $e : (X, \MM, O)$, we fix an ordering on $X$, and define a CSP $\Ke = (X, O, C)$, where $C$ is the set of constraints $((x_1, \ldots , x_k), r)$ such that  $x_1 < \cdots < x_k$, $\{ x_1, \ldots , x_k \} \in C$ and $r = \{ s(\vx) \mid \ec(s) > 0 \}$.
\begin{proposition}
There is a one-to-one correspondence between consistent global assignments for $e$ and solutions of $\Ke$. Thus $e$ is strongly contextual iff $\Ke$ has no (classical) solution.
\end{proposition}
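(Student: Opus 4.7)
The plan is to observe that both sides of the correspondence are, as raw data, the same kind of object, namely a function $g : X \rarr O$, and then to show constraint-by-constraint that the two satisfaction conditions coincide. So I would start by declaring the candidate bijection to be the identity map on $O^X$, and reduce the proposition to verifying that, for each context $C \in \MM$, the condition $e_C(g|_C) > 0$ is equivalent to the CSP constraint associated with $C$ being satisfied by $g$.

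Fix the ordering on $X$ chosen in the definition of $\Ke$, and for $C = \{x_1, \ldots, x_k\} \in \MM$ with $x_1 < \cdots < x_k$ write $\vx = (x_1, \ldots, x_k)$, so the associated constraint of $\Ke$ is $(\vx, r_C)$ with $r_C = \{ s(\vx) \mid e_C(s) > 0 \}$. The equivalence now splits into two easy directions. First, if $e_C(g|_C) > 0$ then $s := g|_C$ lies in the support of $e_C$ and $s(\vx) = g(\vx)$, so $g(\vx) \in r_C$, meaning $g$ satisfies this constraint. Conversely, if $g(\vx) \in r_C$, pick $s$ with $e_C(s) > 0$ and $s(\vx) = g(\vx)$; since $\vx$ enumerates all of $C$, this equality forces $s = g|_C$ as functions on $C$, giving $e_C(g|_C) = e_C(s) > 0$. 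Quantifying over all $C \in \MM$, the global consistency of $g$ in the sense of \cite{AbramskyBrandenburger} is equivalent to $g$ being a solution of $\Ke$.

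For the ``Thus'' clause, I would simply unfold definitions: by construction, $e$ is strongly contextual precisely when no consistent global assignment exists, which under the bijection just established is exactly the non-existence of a classical solution to $\Ke$.

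There is no real obstacle here; the argument is a routine unpacking of definitions. The only point that needs a sentence of care is the role of the fixed ordering on $X$, which ensures that each context $C$ determines a unique tuple $\vx$ and hence a unique constraint in $\Ke$, so that matching contexts with constraints is unambiguous; without this choice one would have to phrase the correspondence up to a symmetry of the arity-$k$ slot, which is inessential but would otherwise clutter the bookkeeping.
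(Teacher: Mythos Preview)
Your proof is correct and is precisely the routine unpacking of definitions that the paper leaves implicit: the paper states this proposition without proof, treating it (like the two preceding propositions in the same subsection) as immediate. Your careful handling of the ordering on $X$ and the pointwise equivalence between $e_C(g|_C)>0$ and $g(\vx)\in r_C$ is exactly what is needed, and there is nothing to compare against.
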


We thus have a three-way correspondence between CSPs, empirical models, and homomorphisms between relational structures.

\subsection{Quantum solutions}

We now consider how quantum resources enter the picture. Since we already have a notion of quantum homomorphism for general relational structures, the correspondences established in the previous subsection give us ready-made notions of quantum solution for CSPs and empirical models.
We define a \emph{quantum solution} for a CSP $\CSP = (V, D, C)$ to be a quantum homomorphism $\AK \arq \BK$, \ie a Kleisli morphism $\AK \rarr \Qd \BK$ for some $d$.
Similarly, we define a quantum solution for an empirical model $e : (X, \MM, O)$ to be a quantum homomorphism $\AKe \arq \BKe$.

We shall now compare these notions to existing ones for empirical models and constraints. These will turn out to be special cases.

Note first that given an empirical model $e$, the corresponding CSP $\Ke$ is determined purely by the \emph{supports} of the probability distributions $\ec$, \ie the possibilistic content of the model.
It is only this information which is relevant to strong contextuality.
We can consider a fine-grained notion of realization of a probabilistic empirical model, as in \cite{AbramskyBrandenburger}. However, if our concern is to give a quantum witness for strong contextuality of 
 an empirical model $e : (X, \MM, O)$, then this is given by a state $\psi$, and a PVM $P_x = \{ \Pxo \}_{o \in O}$ for each $x \in X$, such that $[\Pxo, \Pxpop] = \Ze$ whenever $x$ and $x'$ both occur in some $C \in \MM$. These must then satisfy, for all $C \in \MM$ and $s \in O^C$, $\ec(s) = 0 \, \Rightarrow \, \psi^*P_{\vx,s(\vx)}\psi = 0$, where $P_{\vx,s(\vx)} = P_{x_{1},s(x_{1})} \cdots P_{x_{k},s(x_{k})}$.

This provides a \emph{quantum witness} for strong contextuality if $e$ is a strongly contextual empirical model.
An example is provided by the GHZ state, using $X$ and $Y$ measurements for each party \cite{AbramskyBrandenburger}. 
An infinite family of such examples using three-qubit states is described in \cite{tripartite}.
    
We can also consider a stronger notion. A \emph{state-independent} quantum witness for $e : (X, \MM, O)$ is given by a family of PVMs $\{ P_x \}_{x \in X}$ which, for \emph{any} state $\psi$, yield a quantum witness for $e$.
The Mermin magic square and pentagram \cite{Mermin90:SimpleUnifiedForm}, and Kochen--Specker constructions \cite{kochen1975problem,cabello1996bell}, provide examples of state-independent quantum realizations of strong contextuality.
Note that in the state-independent case, we have the condition: $\ec(s) = 0 \Rightarrow \Pvxvo = \Ze$.
Comparison with the definition of quantum homomorphism $\AKe \arq \BKe$ immediately yields the following result:
\begin{proposition}
For an empirical model $e : (X, \MM, O)$ there is a one-to-one correspondence between state-independent quantum witnesses for $e$ and quantum solutions for $\Ke$.
\end{proposition}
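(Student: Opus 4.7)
The statement amounts to matching definitions, so the plan is to exhibit the correspondence by re-labelling the projectors and verifying that the three conditions defining a state-independent quantum witness coincide with the three conditions (QH1)--(QH3) defining a quantum homomorphism $\AKe \arq \BKe$. First I would fix the convention, already implicit above, that the universe of $\AKe$ is $X$ and that of $\BKe$ is $O$, so a quantum homomorphism is literally a family of projectors indexed by $X \times O$, which is the same indexing set as for the PVMs in a state-independent witness.

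Given a state-independent witness $\{P_x\}_{x \in X}$ with $P_x = \{\Pxo\}_{o \in O}$, I would take the family $\{\Pxo\}_{x \in X,\, o \in O}$ itself as the candidate quantum homomorphism; conversely, a quantum homomorphism yields PVMs by slicing on the first index. Condition (QH1) is then the PVM normalisation $\sum_{o \in O} \Pxo = I$. For (QH2), unpacking the construction of $\Ke$ shows that its relations $R_c^{\AKe}$ are singletons $\{\vx\}$ where $\vx$ is the ordered tuple of variables in the context $C$; hence (QH2) demands $[\Pxo, \Pxpop] = \Ze$ exactly when $x$ and $x'$ both occur in some $C \in \MM$, which is the witness compatibility requirement.

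The only step requiring care is (QH3). For a constraint $c = (\vx, r)$ with $r = \{s(\vx) \mid \ec(s) > 0\}$, we have $R_c^{\BKe} = r$, so (QH3) asserts $\vo \notin r \IMP \Pvxvo = \Ze$. Since, given the fixed ordering on $X$, the tuple $\vo$ uniquely determines the assignment $s \in O^C$ with $s(\vx) = \vo$, the clause $\vo \notin r$ is equivalent to $\ec(s) = 0$, and (QH3) reads $\ec(s) = 0 \IMP \Pvxvo = \Ze$ --- precisely the state-independent witness condition. The two passages between witnesses and quantum homomorphisms are therefore manifestly mutually inverse; there is no real obstacle to overcome, only the careful identification of the constraints of $\Ke$ with the support structure of $e$, which was already arranged when $\Ke$ was defined.
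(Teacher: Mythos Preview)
Your proposal is correct and follows exactly the approach the paper intends: the paper states the proposition as an immediate consequence of comparing definitions (``Comparison with the definition of quantum homomorphism $\AKe \arq \BKe$ immediately yields the following result''), and you have simply spelled out that comparison in full, matching (QH1)--(QH3) to the PVM, commutativity, and support conditions for a state-independent witness. There is nothing to add or correct.
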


An interesting point arising from this result, taken together with the results from Section~\ref{qpstoqhsec}, is that state-independent strong contextuality proofs can always be underwritten by non-locality arguments. This can be seen as a general form of constructions for turning Kochen--Specker contextuality proofs into Bell non-locality arguments \cite{HeywoodRedhead83:Nonlocality}. Indeed, the r\^ole of the entangled state and of Bob in the non-local game is to provide an operational or physical underpinning for the compatibility or generalized no-signalling assumption which is made for empirical models \cite{AbramskyBrandenburger}.
Can we find a similar underpinning in the state-dependent case? We shall return to this point in the final section.


We now consider \emph{binary constraint systems} (BCS), which have been extensively studied \cite{cleve2014characterization,cleve2017perfect,slofstra2016tsirelson,ji2013binary}.
We shall follow the account in \cite{cleve2014characterization}.
A  BCS $(V, C)$ is simply a boolean CSP $(V, \{ 0, 1 \}, C)$. In this case, constraints can be  written in the form $c = (\vx, \bc)$, where $\bc : \{ 0, 1 \}^k \rarr \{ 0, 1 \}$ is a boolean function.
Our general notion of quantum solution yields in this case a family of PVMs  $\Px = \{ \Pxo \}_{o \in \{ 0,1\}}$ for $x \in V$, such that $[\Pxo, \Pxpop] = \Ze$ whenever $x$ and $x'$ both occur in some constraint, and $\Pvxvo = \Ze$ for $c = (\vx, \bc)$ with $\bc(\vo) = 0$.

In \cite{cleve2014characterization}, a notion of \emph{operator solution} for a BCS is defined.
This is an assignment of a self-adjoint operator (aka observable) $\Ax$ to each variable $x$, such that: (1) each $\Ax$ is binary, \ie $\Ax^2 = I$; (2) $[\Ax, \Axp] = \Ze$ when $x$ and $x'$ both occur in the same constraint.
To express constraints, the representation of boolean values $b \mapsto (-1)^b$, $b \in \{0,1\}$,  is used. It is standard that each boolean function $\{ -1, +1 \}^k \rarr \{ -1, +1 \}$ can be uniquely represented by a real multilinear polynomial $p(X_1,\ldots , X_k)$ \cite{o2014analysis}. Moreover, if the corresponding boolean function in the $\{ 0,1\}$-representation is $b : \{ 0,1 \}^k \rarr \{ 0, 1\}$, then for $\vo \in \{ 0,1\}^k$: 
$(-1)^{b(\vo)} = p((-1)^{\vo_1},\ldots,(-1)^{\vo_k})$.
It is also standard that if we substitute pairwise commuting self-adjoint operators $A_1$, \ldots , $A_k$ for the variables $X_1,\ldots,X_k$, we obtain a self-adjoint operator $p(A_1,\ldots,A_k)$. The condition for an operator solution to satisfy the constraints is then expressed as follows: for each constraint $c = (\vx, \bc)$, where $\bc$ is represented by the polynomial $\pc(X_1, \ldots , X_k)$, we must have $\pc(A_1,\ldots,A_k) = -I$.

To relate this notion of operator solution to our quantum solution, note that a binary observable $A_x$ with $A_{x}^2 = I$ has a spectral decomposition $A_x = P_{x,0} - P_{x,1}$, where $P_{x,0}$, $P_{x,1}$ are projectors. Commutation of the observables for variables occurring in the same context is equivalent to commutation of the corresponding projectors.
Given a constraint $c = ((x_1,\ldots,x_k), \pc)$, since the observables $A_{x_1}$, \ldots , $A_{x_k}$ pairwise commute, we obtain a resolution of the identity $\sum_{\vo \in \{ 0,1\}^k} \Pvxvo =I$. Multiplying 
$\pc(A_1,\ldots,A_k)$ by this expression yields
\[ \pc(A_1,\ldots,A_k)  \;\; = \;\;  \sum_{\vo \in \{ 0,1\}^k} \pc((-1)^{\vo_1},\ldots,(-1)^{\vo_k}) \Pvxvo . \]
It follows that $\pc(A_1,\ldots,A_k) = -I$ iff for all $\vo$ with $\bc(\vo) = 0$, $\Pvxvo = \Ze$.
As an immediate consequence, we have:
\begin{proposition}
Given a BCS $(V, C)$, there is a one-to-one correspondence between operator solutions of the BCS and quantum solutions of the corresponding CSP.
\end{proposition}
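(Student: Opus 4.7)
The plan is to make explicit the bijection already outlined in the discussion preceding the statement, and to verify it is well-defined in both directions and mutually inverse. All the substantive computation has been carried out in the paragraph above; what remains is to package it as a correspondence and check each condition.

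Given an operator solution $\{ \Ax \}_{x \in V}$, define projectors $P_{x,0} := \tfrac{1}{2}(I + \Ax)$ and $P_{x,1} := \tfrac{1}{2}(I - \Ax)$. Self-adjointness of $\Ax$ and $\Ax^2 = I$ imply that each $\Pxo$ is a projector and that $P_{x,0} + P_{x,1} = I$, so $\Px = \{ \Pxo \}_{o \in \{0,1\}}$ is a PVM, yielding (QH1). Since each $\Pxo$ is a polynomial in $\Ax$, the operator commutation $[\Ax, \Axp] = \Ze$ for $x,x'$ in a common constraint transfers immediately to $[\Pxo, \Pxpop] = \Ze$, giving (QH2). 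For each constraint $c = (\vx, \bc)$ with polynomial representative $\pc$, the identity derived in the preceding paragraph expresses $\pc(A_1,\ldots,A_k)$ as $\sum_{\vo} \pc((-1)^{\vo_1},\ldots,(-1)^{\vo_k}) \Pvxvo$; combined with $\sum_\vo \Pvxvo = I$ and the mutual orthogonality of the $\Pvxvo$, the hypothesis $\pc(A_1,\ldots,A_k) = -I$ forces $\Pvxvo = \Ze$ precisely for those $\vo$ with $\bc(\vo) = 0$, which is (QH3).

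Conversely, given a quantum solution $\{ \Px \}_{x \in V}$, define $\Ax := P_{x,0} - P_{x,1}$. Self-adjointness is immediate, $\Ax^2 = P_{x,0} + P_{x,1} = I$ by (QH1), and the binary commutation requirement follows from (QH2). The same polynomial expansion, together with the quantum solution condition that $\Pvxvo = \Ze$ whenever $\bc(\vo) = 0$, collapses $\pc(A_1,\ldots,A_k)$ to $-\sum_{\vo : \bc(\vo) = 1} \Pvxvo = -\sum_\vo \Pvxvo = -I$, which is the operator-solution condition.

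Finally, the two assignments are mutually inverse: from $\Ax$ the projectors are recovered as $\tfrac{1}{2}(I \pm \Ax)$, and from the $\Pxo$ the observable is recovered as $P_{x,0} - P_{x,1}$, and these formulas are visibly inverse to each other (this is just the uniqueness of the spectral decomposition of a self-adjoint involution). There is no real obstacle in the argument; the only point worth emphasizing is that the commutation hypotheses in the two notions are literally matched via this bijection, because each $\Pxo$ is a polynomial in $\Ax$ and vice versa.
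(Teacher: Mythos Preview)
Your proof is correct and follows the same approach as the paper: the proposition is stated there as ``an immediate consequence'' of the preceding paragraph, and you have simply written out explicitly the bijection $A_x \leftrightarrow (P_{x,0},P_{x,1})$ via $P_{x,o} = \tfrac{1}{2}(I + (-1)^o A_x)$ and $A_x = P_{x,0} - P_{x,1}$, checking each condition in turn. One minor wording point: when you say $p_c(A_1,\ldots,A_k) = -I$ forces $\Pvxvo = \Ze$ \emph{precisely} for those $\vo$ with $b_c(\vo)=0$, strictly speaking it only forces vanishing for $b_c(\vo)=0$ (some projectors with $b_c(\vo)=1$ could also vanish), but this is exactly what (QH3) requires, so the argument is unaffected.
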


\subsection{Graphs}

The results we have seen thus far show that our notion of quantum homomorphism subsumes a number of existing notions in contextuality and non-local games. However, as we shall now see, the situation in the setting which provided the original motivation for our approach, namely graph homomorphisms, is somewhat more subtle.

Graphs arise as structures for the signature with a single binary relation. Simple graphs are those where the relation is symmetric and irreflexive. If we specialize our definition of quantum homomorphism to the case $G \arq H$ between graphs $G$ and $H$, this gives a family $\{P_{x,y}\}_{x \in V(G), y \in V(H)}$
of projectors satisfying the following conditions:
\begin{itemize}
\item for all $x, x' \in V(G)$ and $y, y' \in V(H)$ with $x \sim x'$, $[P_{x,y},P_{x',y'}] = 0$;
\item for all $x, x' \in V(G)$ and $y, y' \in V(H)$ with $x \sim x'$ and $y \not\sim y'$, $P_{x,y}P_{x',y'} = 0$.
\end{itemize}

This definition differs from that introduced by Man\v{c}inska and Roberson \cite{manvcinska2016quantum}
as a generalization of quantum graph colouring \cite{cameron2007quantum},
in that the latter does \emph{not} impose the first condition forcing commutativity between the PVMs corresponding to adjacent vertices of $G$.
We refer to this as an \emph{MR quantum graph homomorphism} in order to distinguish it from our notion.

This reflects a difference in the Alice--Bob game used to motivate each definition.
In the game we consider,
Alice receives an ordered edge of $G$ as an input and Bob a vertex of $G$,
and we require that Alice answers with an edge of $H$ and Bob with a vertex of $H$
in a fashion consistent with Alice's choice.
By contrast, Man\v{c}inska and Roberson consider a symmetric Alice--Bob game,
where each player receives
a vertex of $G$ as input and outputs a vertex of $H$,
and their outputs are required to be the same when they receive the same vertex,
and required to form an edge of $H$ whenever their inputs form an edge of $G$.

It is clear that a quantum homomorphism between graphs in the sense of this paper
is also an MR quantum graph homomorphism.
But the precise relationship between the two notions is yet to be understood in general:
in particular, whether the existence of an MR quantum graph homomorphism 
implies the existence of a quantum homomorphism in our sense.
However, by adapting a construction due to Ji \cite{ji2013binary},
we can capture the existence of MR quantum graph homomorphisms
in terms of quantum homomorphisms of relational structures,
via a BCS.

Given graphs $G$ and $H$, we define $V = \{ \rxy \mid x \in V(G), y \in V(H) \}$. For each $x \in V(G)$, we have a boolean constraint $\bigvee_y \rxy$; a constraint $\neg (\rxy \, \wedge\,  \rxyp)$ when $y \neq y'$; and a constraint $\neg (\rxy \wedge \rxpyp)$ whenever $x \sim x'$ and 
$y \not\sim y'$. This defines a BCS $(V, C)$.

\begin{theorem}
Given graphs $G$ and $H$, there is a one-to-one correspondence between MR quantum graph homomorphisms from $G$ to $H$ and quantum homomorphisms $\As_{\CSP} \arq\Bs_{\CSP}$ for the associated CSP $\CSP = (V, \{ 0, 1\}, C)$.
\end{theorem}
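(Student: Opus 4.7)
The plan is to exhibit mutually inverse maps between the two sides via the identifications $Q_{r_{xy},1} := P_{x,y}$ and $Q_{r_{xy},0} := I - P_{x,y}$, then to check that the MR axioms and the quantum-homomorphism axioms (QH1)--(QH3) for $\As_{\CSP} \arq \Bs_{\CSP}$ match across this identification constraint by constraint. A key observation used throughout is the standard projector fact: if $P$ and $Q$ are projectors with $PQ = \Ze$, then $QP = (PQ)^* = \Ze$ and hence $[P,Q] = \Ze$. This is what makes orthogonality and commutation line up automatically in the right places.

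From MR to the CSP: suppose $\{P_{x,y}\}$ is an MR quantum graph homomorphism, so each $\{P_{x,y}\}_y$ is a PVM and $P_{x,y} P_{x',y'} = \Ze$ whenever $x \sim x'$ in $G$ and $y \not\sim y'$ in $H$. I would verify the BCS constraints one at a time. The \emph{at most one} constraints $\neg(r_{x,y} \wedge r_{x,y'})$ with $y \neq y'$ require $P_{x,y} P_{x,y'} = \Ze$, immediate from the PVM property. The \emph{at least one} constraint $\bigvee_y r_{x,y}$ requires $\prod_y (I - P_{x,y}) = \Ze$; using pairwise orthogonality and commutation of the $P_{x,y}$ for fixed $x$, this product telescopes to $I - \sum_y P_{x,y}$, which vanishes by normalization. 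The edge constraints $\neg(r_{x,y} \wedge r_{x',y'})$ for $x \sim x'$, $y \not\sim y'$ are verbatim the MR orthogonality condition. Finally, (QH2) requires commutation between outcome projectors for variables that share a constraint: for $x = x'$ this follows from PVM orthogonality, and for $x \sim x'$ with $y \not\sim y'$ it follows from $P_{x,y}P_{x',y'} = \Ze$ via the projector observation above.

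From CSP to MR: given a quantum solution $\{Q_{r_{xy},o}\}$ of the BCS, set $P_{x,y} := Q_{r_{xy},1}$. The combined at-most-one and at-least-one constraints on each $x$ force $\{P_{x,y}\}_y$ to be a PVM by the same telescoping calculation read backwards, and the edge constraints give $P_{x,y} P_{x',y'} = \Ze$ directly. These are exactly the MR conditions, and the two assignments are manifestly mutually inverse.

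The one place where care is genuinely needed is aligning the Gaifman graph of the BCS with what MR does (and does not) impose. The BCS links $r_{xy}$ and $r_{x'y'}$ in some constraint exactly when $x = x'$, or when $x \sim x'$ and $y \not\sim y'$ — i.e.\ precisely the range in which MR's orthogonality supplies commutation for free via the projector observation. Crucially, the BCS leaves $r_{xy}$ and $r_{x'y'}$ unlinked when $x \sim x'$ and $y \sim y'$, which is exactly where MR permits non-commuting projectors and where our stronger graph notion would demand commutation. Confirming that the Gaifman structure of Ji's BCS is calibrated to hit this boundary — no tighter, no looser — is the main conceptual check, and once done the rest of the proof is routine.
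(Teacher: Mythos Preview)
Your proposal is correct and follows essentially the same approach as the paper: the same bijection $Q_{r_{xy},1} \leftrightarrow P_{x,y}$, $Q_{r_{xy},0} \leftrightarrow I - P_{x,y}$, the same telescoping identity $\prod_y (I - P_{x,y}) = I - \sum_y P_{x,y}$ under pairwise orthogonality, and the same observation that the required commutativity conditions are forced by the orthogonality conditions (your projector fact). Your Gaifman-graph analysis makes explicit what the paper compresses into a single sentence, but the argument is the same.
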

\begin{proof}
We recall that an MR quantum graph homomorphism is given by a family of projectors $\{ \Pxy \}_{x \in V(G), y \in V(H)}$ such that (MR1) $\sum_y \Pxy = I$, and (MR2) $\Pxy\Pxpyp = \Ze$ whenever $x \sim x'$ and $y \not\sim y'$.
A quantum homomorphism $\As_{\CSP} \arq \Bs_{\CSP}$ is given by a family of projectors $\{ \Qxyo \}$, $x \in V(G)$, $y \in V(H)$, $o \in \{0,1\}$, such that the following conditions hold: (QH1) $\Qxyf + \Qxyt = I$; (QH2) $\Qxyt \Qxypt = \Ze$, $(y \neq y')$; (QH3) $\Qxyt \Qxpypt = \Ze$, $(x \sim x' \wedge y \not\sim y')$; (QH4) $Q_{xy_{1},0} \cdots Q_{xy_{k},0} = \Ze$, $V(H) = \{ y_1, \ldots , y_k \}$.
The commutativity conditions which are additionally required are implied by the orthogonality conditions (QH2) and (QH3).

Given an MR homomorphism $\{ \Pxy \}$, we define $\Qxyt := \Pxy$, $\Qxyf := I - \Pxy$. Clearly (QH1) is satisfied. (MR1) implies (QH2), while (MR2) implies (QH3).
Finally, using (QH2), $(I - P_{x,y_{1}}) \cdots (I - P_{x,y_{k}}) = I - \sum_y \Pxy$, and by (MR1), (QH4) holds. 

Conversely, given a quantum homomorphism $\{ \Qxyo \}$, we define $\Pxy := \Qxyt$. (MR2) follows from (QH3),
while using (QH1), we can reverse the reasoning in the previous paragraph to show that (QH4) implies (MR1).
These passages are clearly mutually inverse, so the result follows.
\end{proof}

It is noteworthy that our approach allows us to avoid \textit{ad hoc} coding of constraints by polynomials, as in \cite{cleve2014characterization,ji2013binary}. Instead, we quantize the standard classical notions in a uniform way, using the quantum monad.

\section{Outlook}
This work suggests a number of directions for further study. We list a few:
\begin{itemize}
\item A notion of quantum graph isomorphism, with an equivalent characterization via an Alice--Bob game, has been studied in \cite{atserias2016quantum}. This can be generalized to relational structures.
How does this fit in our quantum monad framework? 
\item Our approach captures quantum advantage provided by state-independent strong contextuality. Does state-dependent contextuality admit a similar treatment?
\item Any strategy for an Alice--Bob game has a winning probability, which is related to the contextual fraction \cite{AbramskyBarbosaMansfield17:CF}. Can our approach be adapted to deal with quantitative aspects?
\item Homomorphisms are intimately related with the existential positive fragment.
Can this be extended to provide a notion of quantum validity for first-order formulae?
\item Can other concepts from finite model theory,
such as pebble games, which admit a comonadic formulation \cite{abramsky2017pebbling},  be similarly quantized?
\item The  algebras of the quantum monad can be described as convex structures with mixing weighted by projectors rather than just numbers in $[0,1]$. Is this viewpoint useful?
\end{itemize}

\bibliographystyle{plain}
\bibliography{qm} 

\section*{Appendix}

\subsection*{Review of linear algebra and quantum mechanics background}

Since we are working in finite dimensions, we will use standard matrix notation. Thus we are dealing with $d \times d'$ complex matrices.  We write matrix transpose as $A^T$. Apart from the usual operations of matrix addition and multiplication, there is the adjoint $A^*$, which is the conjugate transpose of $A$. Thus $[a_{i,j}]^{*} = [ \overline{a_{j,i}}]$. The zero matrix is $\Ze$, the identity matrix in dimension $d$ is $I = I_d$.\footnote{We will omit dimensional subscripts whenever we can get away with it.}
We view $d' \times d$ complex matrices interchangably as linear maps $\Complex^{d} \rarr \Complex^{d'}$, acting on ``column vectors'', \ie $d \times 1$ matrices. We identify $1 \times 1$ matrices with scalars, \ie complex numbers. The inner product of vectors $\vx, \vy \in \Complex^d$ is given  by the matrix product $\vx^* \vy$. The norm of a vector $\vx$ is $\| \vx \| := \sqrt{\vx^*\vx}$. The standard basis vectors in dimension $d$ are $e_1, \ldots , e_d$, where $e_i$ has $i$'th component $1$, and all other components $0$.

A square matrix $A$ is self-adjoint (aka Hermitian) if $A^* = A$. It is positive semidefinite if it self-adjoint, and $\vx^*A\vx \geq 0$ for all vectors $\vx$. If $A$ is positive semidefinite, so is $C^*AC$ for any $C$. We have the order $A \leq B$ if $B - A$ is positive semidefinite. The condition $A \geq \Ze$ says exactly that $A$ is positive semidefinite. The matrices $A \geq \Ze$ form a convex cone.
$A$ is a projector if $A^* = A = A^2$.  We write $\Projd$ for the set of $d \times d$ projectors.
A fact we shall use frequently is that for any family of projectors $\{ P_i \}$ in $\Projd$, $\sum_i P_i \leq I$ iff $P_iP_j = \Ze$ whenever $i \neq j$.

If $A = [a_{i,j}]$ is a $m \times n$ matrix and $B$ a $p\times q$ matrix, then the Kronecker product
$A \otimes B := [a_{i,j} B]$ is an $mp \times nq$ matrix, which represents the tensor product  of the corresponding linear maps. This operation is strictly associative, with unit $\mathbf{1} := [1]$. The key equation is the interchange law with matrix multiplication: $(A \otimes B)(C \otimes D) = AC \otimes BD$. This is functoriality.
The category of complex matrices is a strict monoidal category with respect to this operation.
Indeed, the category of complex matrices is equivalent to the category of finite-dimensional Hilbert spaces at the level of dagger compact closed categories, the basic setting for categorical quantum mechanics \cite{abramsky2004categorical,abramsky2008categorical}.
The final operation we consider is vectorization of a matrix: $\vct{A}$ turns a $d \times d'$ matrix into a $dd'$-vector by stacking the columns of $A$ on top of each other. In terms of the closed structure on the category of matrices, it is the name of the morphism $A$. The ``cup'' or unit of the compact closed structure is $\vct{I}$. We have the equation $\vct{A} = (I \otimes A)\vct{I}$, and the ``sliding rule'': $(A \otimes I)\vct{I} = (I \otimes A^{T})\vct{I}$, from which we can derive the key equation for vectorization: $(A \otimes B)\vct{C} = \vct{BCA^{T}}$.
Diagrammatically, this is
\begin{center}
\begin{tikzpicture}[scale=2.54]
\ifx\dpiclw\undefined\newdimen\dpiclw\fi
\global\def\dpicdraw{\draw[line width=\dpiclw]}
\global\def\dpicstop{;}
\dpiclw=0.8bp
\dpiclw=1bp
\dpicdraw (0,-0.145455) rectangle (0.218182,0.145455)\dpicstop
\draw (0.109091,0) node{$A$};
\dpicdraw (0.509091,-0.145455) rectangle (0.727273,0.145455)\dpicstop
\draw (0.618182,0) node{$B$};
\dpicdraw (0.509091,-0.727273) rectangle (0.727273,-0.436364)\dpicstop
\draw (0.618182,-0.581818) node{$C$};
\dpicdraw (0.109091,-0.727273)
 ..controls (0.109091,-0.867854) and (0.223055,-0.981818)
 ..(0.363636,-0.981818)
 ..controls (0.504218,-0.981818) and (0.618182,-0.867854)
 ..(0.618182,-0.727273)\dpicstop
\dpicdraw (0.618182,-0.436364)
 --(0.618182,-0.145455)\dpicstop
\dpicdraw (0.109091,0.145455)
 --(0.109091,0.436364)\dpicstop
\dpicdraw (0.618182,0.145455)
 --(0.618182,0.436364)\dpicstop
\dpicdraw (0.109091,-0.727273)
 --(0.109091,-0.145455)\dpicstop
\draw (1.309091,-0.581818) node{$=$};
\dpicdraw (2.4,-0.727273) rectangle (2.618182,-0.436364)\dpicstop
\draw (2.509091,-0.581818) node{$A^{T}$};
\dpicdraw (2.4,-0.145455) rectangle (2.618182,0.145455)\dpicstop
\draw (2.509091,0) node{$C$};
\dpicdraw (2.4,0.436364) rectangle (2.618182,0.727273)\dpicstop
\draw (2.509091,0.581818) node{$B$};
\dpicdraw (2,-0.727273)
 ..controls (2,-0.867854) and (2.113964,-0.981818)
 ..(2.254545,-0.981818)
 ..controls (2.395127,-0.981818) and (2.509091,-0.867854)
 ..(2.509091,-0.727273)\dpicstop
\dpicdraw (2.509091,-0.436364)
 --(2.509091,-0.145455)\dpicstop
\dpicdraw (2.509091,0.145455)
 --(2.509091,0.436364)\dpicstop
\dpicdraw (2.509091,0.727273)
 --(2.509091,1.018182)\dpicstop
\dpicdraw (2,-0.727273)
 --(2,1.018182)\dpicstop
\end{tikzpicture}

\end{center}
A vector $\psi \in \Complex^{d_A} \otimes \Complex^{d_B}$ has a \emph{Schmidt decomposition} $\psi = \sum_{i=1}^d \lambda_i \alpha_i \otimes \beta_i$, where $d \leq \min(d_a,d_b)$, $\{ \alpha_i \}$ and $\{ \beta_i \}$ are orthonormal sets of vectors in $\Complex^{d_A}$ and $\Complex^{d_B}$ respectively, and $\lambda_i > 0$ for all $i$. This follows directly from Singular Value Decomposition. We refer to $d$ as the \emph{Schmidt rank} of $\psi$.

The following result is standard, but we did not find an explicit reference so we include a proof.
\begin{proposition}
\label{trprop}
Let $A$ and $B$ be positive semidefinite matrices. Then $\Tr(AB) = 0 \iff AB = \Ze$.
\end{proposition}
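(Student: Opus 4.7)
The plan is to exploit positive semidefinite square roots and reduce the problem to a Frobenius-norm statement. The direction $AB = \Ze \Rightarrow \Tr(AB) = 0$ is immediate. For the converse, I will use the fact that in finite dimensions every positive semidefinite matrix $M$ admits a self-adjoint positive semidefinite square root $M^{1/2}$ satisfying $M^{1/2}M^{1/2} = M$. This is standard: diagonalize $M = UDU^*$ with $D = \diag(\lambda_i)$, $\lambda_i \geq 0$, and set $M^{1/2} := U\diag(\sqrt{\lambda_i})U^*$.

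Applied to $A$ and $B$, cyclicity of the trace gives
\[
\Tr(AB) \;=\; \Tr(A^{1/2}A^{1/2}B^{1/2}B^{1/2}) \;=\; \Tr(B^{1/2}A^{1/2}A^{1/2}B^{1/2}) \;=\; \Tr\bigl((A^{1/2}B^{1/2})^{*}(A^{1/2}B^{1/2})\bigr),
\]
using $(A^{1/2})^{*} = A^{1/2}$ and $(B^{1/2})^{*} = B^{1/2}$ in the last step. The right-hand side is $\|A^{1/2}B^{1/2}\|_{F}^{2} = \sum_{i,j}|(A^{1/2}B^{1/2})_{ij}|^{2}$, which vanishes if and only if $A^{1/2}B^{1/2} = \Ze$.

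So from $\Tr(AB) = 0$ I conclude $A^{1/2}B^{1/2} = \Ze$. Then, multiplying on the left by $A^{1/2}$ and on the right by $B^{1/2}$, I obtain
\[
AB \;=\; A^{1/2}\bigl(A^{1/2}B^{1/2}\bigr)B^{1/2} \;=\; \Ze,
\]
completing the equivalence.

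There is no real obstacle; the proof is short and the only non-trivial ingredient is the existence of the self-adjoint positive semidefinite square root, which is standard for finite matrices. The conceptual step is simply recognizing $\Tr(AB)$ as a squared Frobenius norm once both factors are broken into their square roots.
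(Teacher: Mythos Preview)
Your proof is correct and essentially identical to the paper's: both take positive semidefinite square roots, use cyclicity of the trace to rewrite $\Tr(AB)$ as $\Tr(C^{*}C)$ with $C=A^{1/2}B^{1/2}$, conclude $C=\Ze$, and then recover $AB=A^{1/2}CB^{1/2}=\Ze$. The only cosmetic difference is that you phrase $\Tr(C^{*}C)=0\Rightarrow C=\Ze$ via the Frobenius norm, while the paper says it via positive semidefiniteness of $C^{*}C$.
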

\begin{proof}
Since $A$ and $B$ are positive semidefinite, they have positive semidefinite square roots $\Asq$, $\Bsq$, and
$\Tr(AB) = \Tr(\Asq\Asq\Bsq\Bsq) = \Tr(\Bsq\Asq\Asq\Bsq) = \Tr(C^*C)$, where $C := \Asq\Bsq$.
Since $C^*C$ is positive semidefinite, $\Tr(C^*C) = 0 \iff C^*C =\Ze \iff C = \Ze$. But $C = \Ze$ implies $AB = \Asq C\Bsq = \Ze$.
\end{proof}

Now we briefly review the needed notions from quantum mechanics. A (pure) state in dimension $d$ is a vector of unit norm in $\Complex^d$. A POVM (positive operator-valued measure) is a family  $\{ A_i \}_{i}$ with $A_i \geq \Ze$ for all $i$, and $\sum_i A_i = I$. The indices $i$ label the measurement outcomes. Measuring a POVM  $\{ A_i \}_{i}$ on a state $\psi$ yields outcome $i$ with probability $\psi^*A_i\psi$.
A POVM is projective (or a PVM) if $A_i$ is a projector for all $i$. This implies that $A_iA_j = \Ze$ for all $i \neq j$, \ie the projectors are mutually orthogonal. The product of projectors is a projector if and only if they commute, which is usually written as $[P, Q] = \Ze$, where $[P, Q] := PQ - QP$.

\subsection*{Quantum monads}

\begin{proposition}
For each $\As$, $\muddA$ is a well-defined homomorphism, and yields a natural transformation.
\end{proposition}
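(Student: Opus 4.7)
The plan is to verify in turn that (1) $\muddA(P)$ lies in $\Qddp \As$ for every $P \in \Qd\Qdp \As$, (2) $\muddA$ preserves relations, and (3) the family $\{\muddA\}_\As$ is natural in $\As$. Part (1) is routine: each $\muddA(P)(x) = \sum_p P(p) \otimes p(x)$ is manifestly self-adjoint, and squares to itself because the interchange law $(A\otimes B)(C\otimes D) = AC\otimes BD$, combined with pairwise orthogonality of $\{P(p)\}_p$ (a consequence of $\sum_p P(p) = I_d$), kills all cross terms and leaves $\sum_p P(p) \otimes p(x)^2 = \sum_p P(p) \otimes p(x)$. Normalization $\sum_x \muddA(P)(x) = I_{dd'}$ is immediate from $\sum_x p(x) = I_{d'}$ and $\sum_p P(p) = I_d$.

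The heart of the argument is (2). Fix a $k$-ary relation $R$ of $\sg$ and a tuple $(P_1, \ldots, P_k) \in \RA{}$ inside $\Qd\Qdp \As$. Using (QR1) for the $P_i$s, the joint operators $\mathbf{P}(\vec{q}) := P_1(q_1) \cdots P_k(q_k)$, indexed by $\vec{q} \in (\Qdp \As)^k$, are well-defined pairwise orthogonal projectors summing to $I_d$. The key rewrite is
\[ \muddA(P_i)(x) \;=\; \sum_{\vec{q}} \mathbf{P}(\vec{q}) \otimes q_i(x), \]
obtained by substituting $P_i(p) = \sum_{\vec{q}:\, q_i = p} \mathbf{P}(\vec{q})$. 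Pairwise orthogonality of the $\mathbf{P}(\vec{q})$'s then collapses any product of such sums to a single diagonal sum:
\[ \muddA(P_{i_1})(x_1) \cdots \muddA(P_{i_m})(x_m) \;=\; \sum_{\vec{q}} \mathbf{P}(\vec{q}) \otimes q_{i_1}(x_1) \cdots q_{i_m}(x_m). \]
By (QR2) for $(P_1, \ldots, P_k)$, $\mathbf{P}(\vec{q}) = \Ze$ whenever $\vec{q} \notin \RA{}$ of $\Qdp \As$, so only terms with $\vec{q} \in \RA{}$ survive. Commutativity $[\muddA(P_i)(x), \muddA(P_j)(y)] = \Ze$ then follows from (QR1) applied to $\vec{q}$; and when $\vx \notin \RA{}$ of $\As$, (QR2) applied to $\vec{q}$ yields $q_1(x_1) \cdots q_k(x_k) = \Ze$, whence $\muddA(P_1)(x_1) \cdots \muddA(P_k)(x_k) = \Ze$. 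This gives both (QR1) and (QR2) for the image tuple.

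Naturality (3) is a one-line calculation: for $h : \As \rarr \Bs$ and $y \in B$, both $(\Qddp h \circ \muddA)(P)(y)$ and $(\muddB \circ \Qd(\Qdp h))(P)(y)$ unfold, after swapping outer and inner sums, to $\sum_p P(p) \otimes \sum_{h(x)=y} p(x) = \sum_p P(p) \otimes (\Qdp h)(p)(y)$.

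The main obstacle is the commutativity half of (2): the naive expansion of the commutator produces cross terms $P_i(p)P_j(q) \otimes (p(x)q(y) - q(y)p(x))$ with no evident term-by-term reason to vanish, since (QR1) and (QR2) for $(P_1,\ldots,P_k)$ constrain the $P_i$'s and certain products of the inner distributions, but not arbitrary products of $p(x)$'s with $q(y)$'s. Reindexing by the joint projectors $\mathbf{P}(\vec{q})$ is the nontrivial move: it reduces everything to assertions about single tuples $\vec{q} \in \RA{}$ of $\Qdp \As$, after which (QR1) and (QR2) on those tuples do the rest.
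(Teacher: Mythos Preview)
Your proof is correct, and the organization differs from the paper's in an instructive way. The paper argues (QR1) for the image tuple term by term: to show that $P_i(p)\otimes p(z)$ commutes with $P_j(p')\otimes p'(z')$, it first uses (QR1) for the $P_i$'s to get $[P_i(p),P_j(p')]=\Ze$, then splits into cases on whether $[p(z),p'(z')]=\Ze$. When it is not, the contrapositive of (QR1) shows $p$ and $p'$ are non-adjacent in the Gaifman graph of $\Qdp\As$, so every $\vec q$ with $q_i=p$, $q_j=p'$ lies outside $\RQdp$; then (QR2) for $(P_1,\ldots,P_k)$ plus normalization gives $P_i(p)P_j(p')=\Ze$, and both orderings of the product vanish. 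The paper handles (QR2) for the image by the same kind of bilinear expansion followed by a two-case argument.

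Your reindexing by the joint projectors $\mathbf{P}(\vec q)=P_1(q_1)\cdots P_k(q_k)$ absorbs this case split into a single identity: once products of the $\muddA(P_i)(x)$'s are rewritten as $\sum_{\vec q}\mathbf{P}(\vec q)\otimes(\cdots)$, (QR2) for the outer tuple kills all $\vec q\notin\RQdp$, and (QR1)/(QR2) for the surviving inner tuples finish both conditions uniformly. This is exactly the same information as the paper uses, but packaged so that no Gaifman-graph contrapositive or case analysis is visible. The cost is introducing the auxiliary PVM $\{\mathbf{P}(\vec q)\}$ (and implicitly relying on finite support of each $P_i$ to keep the sum finite); the payoff is a shorter, more symmetric argument that treats (QR1) and (QR2) in parallel.
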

\begin{proof}
Firstly, $\muddA$ is a well-defined function, by finiteness of support. To show that it is a homomorphism,
consider  $(P_1,...,P_k) \in \RQdQdp$. We must show firstly that for all  $z,z' \in C$, and $i,j$, $[p_i(z), p_j(z')] = \Ze$, where $p_i := \muddA(P_i)$, $p_j := \muddA(P_j)$. Using linearity, this reduces to showing that
$P_i(p) \otimes p(z)$ commutes with $P_j(p') \otimes p'(z')$ for all $p, p' \in \Qdp$. Applying (QR1) to $P_i$ and $P_j$, we have that $P_i(p)$ commutes with $P_j(p')$. If $p(z)$ commutes with $p'(z')$, we are done.
If not, then we know that $p$ cannot be adjacent to $p'$ in the Gaifman graph of $\Qdp$. Hence for any expansion $\vp = (p_1, \ldots , p_k)$ with $p_i = p$, $p_j = p'$, $\vp \not\in \RQdp$, so by (QR2) we must have $P_1(p_1) \cdots P_k(p_k) = \Ze$. Using normalization, we have
$P_i(p)P_j(p') = \sum_{\vp_i =p, \vp_j = p'} P_1(p_1) \cdots P_k(p_k) = \Ze$, and so 
\[ (P_i(p) \otimes p(z))(P_j(p') \otimes p'(z')) = \Ze = (P_j(p') \otimes p(z'))(P_i(p) \otimes p(z)) . \]
Now let $q_i = \muddA(P_i)$, $i = 1,\ldots, k$. 
To show that (QR2) holds for $(q_1,\ldots,q_k)$ reduces similarly to showing that, if $\vx \not\in \RA{}$, then 
$P_1(p_1)\cdots P_k(p_k) \otimes p_1(x_1)\cdots p_k(x_k) = \Ze$ for all $p_1,\ldots , p_k$. If 
$p_1(x_1)\cdots p_k(x_k) = \Ze$ we are done; otherwise, since $\vx \not\in \RA{}$, we must have 
$\vp \not\in \RQdp$, and applying (QR2) to $(P_1,...,P_k)$, we must have $P_1(p_1)\cdots P_k(p_k) = \Ze$.
 
Naturality is commutativity of the following square. 
\begin{diagram}[4em]
\Qd \Qdp \As & \rTo^{\muddA} & \Qddp \As \\
\dTo^{\Qd \Qdp f} & & \dTo_{\Qddp f} \\
\Qd \Qdp \Bs & \rTo_{\muddB} & \Qddp \Bs
\end{diagram}
This is the following calculation:
\begin{align*}
\Qddp f \circ \muddA(P)(y) 
= \;\; & \sum_{f(x)=y} \sum_{p} P(p) \otimes p(x) \\
= \;\; & \sum_{q} \sum_{\Qdp f(p) = q} P(p) \otimes q(y) \\
= \;\; & \sum_{q} \Qd \Qdp(f)(P)(q) \otimes q(y) \\
= \;\; & \muddB \circ \Qd \Qdp f(P)(y) .
\end{align*}
The second step uses the fact that $\Qdp f(p) = q \iff q(y) = \sum_{f(x)=y} p(x)$.
\end{proof}

The unit $\eta:\Id \natarrow T_1$ and graded multiplication $\mu^{m,m'}:T_m T_{m'}\natarrow T_{m\cdot m'}$ of a graded $M$-monad are required to satisfy the following coherence conditions:
\begin{diagram}[width=5em]
T_mX &\rTo^{\eta_{T_mX}}&T_1T_mX&&T_mT_{m'}T_{m^{\prime\prime}}X&\rTo^{T_m\mu_X^{m',m^{\prime\prime}}}&T_mT_{m'\cdot m^{\prime\prime}}X\\
\dTo^{T_m\eta_X} & \rdEq &\dTo_{\mu_X^{1,m}}&&\dTo^{\mu_{Tm^{\prime\prime}X}^{m,m'}} &       &\dTo_{\mu_X^{m,m'\cdot m^{\prime\prime}}}\\
T_mT_1X& \rTo_{\mu_X^{m,1}}   &  T_mX&&T_{m\cdot m'}T_{m^{\prime\prime}}X& \rTo_{\mu_X^{m\cdot m',m^{\prime\prime}}}   &  T_{m\cdot m'\cdot m^{\prime\prime}}X.
\end{diagram} 

We verify these for the quantum monad.

\begin{lemma}\label{lem:1}
Let $\As$ be a structure and $d\in\Np$. Then, the following diagram commutes:
\begin{diagram}[width=6em]
\Qd\As&\relax\rTo^{\eta_{\Qd\As}}&\Qone\Qd\As\\
\dTo^{\Qd\eta_\As} & \rdEq  &\relax\dTo_{\mu_\As^{1,d}}\\
\Qd\Qone\As& \relax\rTo_{\mu_\As^{d,1}}   &  \Qd\As
\end{diagram}
\end{lemma}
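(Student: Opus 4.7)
The plan is to verify both triangles of the diagram by direct unfolding of the definitions of $\eta$ and $\mu^{d,d'}$; since the hypotenuse is marked as an equality, the outer square commutes iff each triangle does. Concretely, I need to check $\mu_\As^{1,d}\circ\eta_{\Qd\As} = \id_{\Qd\As}$ (the top--right triangle) and $\mu_\As^{d,1}\circ\Qd\eta_\As = \id_{\Qd\As}$ (the bottom--left triangle). Throughout, I silently use the canonical identifications $\Complex^1\otimes\Complex^d \cong \Complex^d \cong \Complex^d\otimes\Complex^1$ — equivalently, $I_1\otimes P = P = P\otimes I_1$ — which are the monoidal unit isomorphisms of the strict monoidal category of complex matrices recalled in the Appendix.

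For the right--hand triangle, given $p\in\Qd\As$, one has $\eta_{\Qd\As}(p) = \delta_p\in\Qone\Qd\As$, supported on $\{p\}$ with value $I_1$ and vanishing elsewhere. The defining sum for $\mu$ then collapses:
\[
\mu_\As^{1,d}(\delta_p)(x) \;=\; \sum_{p'\in\Qd\As}\delta_p(p')\otimes p'(x) \;=\; I_1\otimes p(x) \;=\; p(x),
\]
as required.

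For the left--hand triangle, I note that $\eta_\As$ is injective ($x\neq x'$ gives $\delta_x\neq\delta_{x'}$), so the sum $\Qd\eta_\As(p)(q) = \sum_{\eta_\As(x)=q}p(x)$ has at most one term: it equals $p(x)$ when $q = \delta_x$ and $\Ze$ otherwise. Therefore
\[
\mu_\As^{d,1}(\Qd\eta_\As(p))(x) \;=\; \sum_{q\in\Qone\As}\Qd\eta_\As(p)(q)\otimes q(x) \;=\; \sum_{x'\in A} p(x')\otimes\delta_{x'}(x) \;=\; p(x)\otimes I_1 \;=\; p(x).
\]
There is no substantive obstacle — the finite--support condition built into $\Qd\As$ ensures all sums are well--defined as projector arithmetic, and the only thing to be mildly careful about is left versus right placement of the unit $I_1$ in the tensor, which is why I handle the two triangles separately rather than appealing to a symmetry argument.
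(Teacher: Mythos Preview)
Your proof is correct and follows essentially the same approach as the paper's: both verify the two unit triangles by direct unfolding of the definitions of $\eta$ and $\mu$, with the sums collapsing via the delta-distribution structure. Your presentation is in fact slightly more explicit about the tensor-unit identifications $I_1\otimes P = P = P\otimes I_1$ and the injectivity of $\eta_\As$, but the argument is the same.
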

\begin{proof}
Let $p\in \Qd\As$ and $x\in A$. The claim is that $(\mu_\As^{1,d}\circ\eta_{\Qd\As})(p)(x)=p(x) = (\mu_\As^{d,1}\circ \Qd\eta_\As)(p)(x)$.  The left-hand side of this equation expands to 
\begin{align*}
\mu_\As^{1,d}(\eta_{\Qd\As}(p))(x)
= \sum_{p'\in \Qd\As}\eta_{\Qd\As}(p)(p')\otimes p'(x)
= p(x),
\end{align*}
and the right-hand side to
\begin{align*}
\mu_\As^{d,1}( \Qd\eta_\As(p))(x)
  = \; & \sum_{p'\in \Qone\As} \Qd\eta_\As(p)(p')\otimes p'(x)
\\= \; &~ \Qd\eta_\As(p)(\delta_x)
\\= \; & \sum_{\eta_\As(x')=\delta_x}\eta_\As(p)(x')
\\= \; &~ p(x),
\end{align*}
where $\delta_{x}:A\to \{0,1\}$ is defined by $\delta_x(x'):=\delta_{x,x'}$ for all $x'\in A$. 
\end{proof}

\begin{lemma}\label{lem:2}
Let $\As$ be a structure and $a,b,c\in\Np$. Then, the following diagram commutes:
\begin{diagram}[width=6em]
\Qa\Qb\Qc\As&\relax\rTo^{\Qa\mu_\As^{b,c}}&\Qa\Qbc\As\\
\dTo^{\mu_{\Qc\As}^{a,b}} &       &\relax\dTo_{\mu_\As^{a,bc}}\\
\Qab\Qc\As& \relax\rTo_{\mu_\As^{ab,c}}   &  \Qabc\As
\end{diagram}
\end{lemma}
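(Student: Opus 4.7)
The plan is to chase an arbitrary element $P \in \Qa\Qb\Qc\As$ through both paths of the square and verify equality pointwise at each $x \in A$. The computation mirrors the associativity proof for the ordinary discrete distribution monad, with the graded tensor $\otimes$ playing the role of semiring multiplication on the partial semiring $\Projd$. The key ingredients are the explicit formula $\mu^{d,d'}_\As(P)(x) = \sum_{p \in \Qdp\As} P(p) \otimes p(x)$, the formula $\Qd h(p)(y) = \sum_{h(x)=y} p(x)$ for the functorial action, and the strict associativity of $\otimes$ on matrices.

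First I would unfold the top-right composite. Applying $\mu_\As^{a,bc}$ to $\Qa \mu_\As^{b,c}(P)$ at $x$ gives $\sum_{q \in \Qbc\As} \Qa\mu_\As^{b,c}(P)(q) \otimes q(x)$; then expanding the functorial action of $\Qa$ on $\mu_\As^{b,c}$ and using the observation (as in Lemma~\ref{lem:1}) that $\Qa f(P)(y) = \sum_{f(p)=y} P(p)$, this rewrites as $\sum_{p \in \Qb\Qc\As} P(p) \otimes \mu_\As^{b,c}(p)(x)$. A final unfolding of $\mu_\As^{b,c}$ produces the iterated sum
\[
\sum_{p \in \Qb\Qc\As} \sum_{r \in \Qc\As} P(p) \otimes p(r) \otimes r(x).
\]

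Next I would unfold the bottom-left composite. Applying $\mu_\As^{ab,c}$ to $\mu_{\Qc\As}^{a,b}(P)$ at $x$ gives $\sum_{r \in \Qc\As} \mu_{\Qc\As}^{a,b}(P)(r) \otimes r(x)$, and expanding the inner $\mu$ yields the same iterated sum $\sum_{p,r} P(p) \otimes p(r) \otimes r(x)$. Equality of the two composites at $x$ then follows immediately from commuting the finite sums and the associativity of $\otimes$.

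The only technical points to check are that all the sums are well-defined and finite: this is guaranteed by the finite-support property of projector-valued distributions in fixed dimension (noted in the construction of $\Qd A$), so the reindexing step that turns $\sum_q \sum_{\mu(p)=q} (\cdots)$ into $\sum_p (\cdots)$ is legitimate. I do not anticipate any real obstacle beyond this bookkeeping; the argument is essentially the classical proof of associativity for the distribution monad, transported to the projector-valued setting via the graded multiplication $\otimes : \Projd \times \Projdp \to \Projddp$.
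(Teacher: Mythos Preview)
Your proposal is correct and follows essentially the same approach as the paper: both proofs unfold the two composites pointwise using the formulae for $\mu$ and for the functorial action, reindex the double sum $\sum_{q}\sum_{\mu(p)=q}$ into a single sum over $p$, and meet at the common expression $\sum_{p}\sum_{r} P(p)\otimes p(r)\otimes r(x)$, relying on finite support and associativity of $\otimes$.
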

\begin{proof}
Let $P\in \Qa\Qb\Qc\As$ and $x\in A$. The claim is that $(\mu_\As^{a,bc}\circ \Qa\mu_\As^{b,c})(P)(x)=(\mu_\As^{ab,c}\circ \mu_{\Qc\As}^{a,b})(P)(x)$.
We have:
\begin{align*}
       &~ \mu_\As^{a,bc}(\Qa\mu_\As^{b,c}(P))(x)
\\= \; &\sum_{q\in \Qbc\As} \Qa\mu_\As^{b,c}(P)(q) \otimes q(x)
\\= \; &\sum_{q\in \Qbc\As}~\left(\sum_{\mu_\As^{b,c}(p')=q}P(p') \right) \otimes q(x)
\\= \; &\sum_{q\in \Qbc\As}~\sum_{\mu_\As^{b,c}(p')=q}P(p')\otimes \mu_\As^{b,c}(p')(x)
\\= \; &\sum_{q\in \Qbc\As}~\sum_{\mu_\As^{b,c}(p')=q}P(p')\otimes \left(\sum_{p\in \Qc\As}p'(p)\otimes p(x)\right)
\\= \; &\sum_{q\in \Qbc\As}~\sum_{\mu_\As^{b,c}(p')=q}~\sum_{p\in \Qc\As} P(p')\otimes p'(p)\otimes p(x)
\\= \; &\sum_{p'\in \Qb(\Qc\As)}~\sum_{p\in \Qc\As} P(p')\otimes p'(p)\otimes p(x)
\\= \; &\sum_{p\in \Qc\As}~\left(\sum_{p'\in \Qb(\Qc\As)} P(p')\otimes p'(p) \right)\otimes p(x)
\\= \; &\sum_{p\in \Qc\As}\mu_{\Qc\As}^{a,b}(P)(p)\otimes p(x)
\\= \; &~ \mu_\As^{ab,c}(\mu_{\Qc\As}^{a,b}(P))(x)
\end{align*}
\end{proof}

\subsection*{Additional material on the quantum monad}

We shall now show that the quantum monad is monoidal (or commutative) and affine \cite{kock1970monads}.
This continues the analogy with the distribution monad, which is well known to have these properties \cite{jacobs2010convexity}.

The category $\CS$ has finite products, given by the usual cartesian product of structures. The terminal object $\top$ is the one-element structure, with each relation interpreted as the universal relation.
Because of normalization, the following is immediate:
\begin{proposition}
For all $d$, $\Qd \top \cong \top$. Thus the quantum monad is affine.
\end{proposition}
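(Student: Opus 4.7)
The plan is to verify directly that both the universe and the relations of $\Qd \top$ agree, up to isomorphism, with those of $\top$. Write $\top = (\{*\}, \{R^{\top}\}_{R \in \sg})$ where each $R^{\top}$ is the universal relation on $\{*\}$, i.e.\ $R^{\top} = \{*\}^{k}$ for $R$ of arity $k$.

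First I would analyse the universe. By definition, an element of $\Qd \top$ is a function $p : \{*\} \rarr \Projd$ satisfying the normalization $\sum_{x \in \{*\}} p(x) = I$. This forces $p(*) = I$, so $\Qd \top$ has exactly one element, call it $p_0$. Hence there is a unique bijection between the universes of $\Qd \top$ and $\top$.

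Next I would check the relations. For each $k$-ary $R \in \sg$, the only candidate tuple in $R^{\Qd \top}$ is $(p_0, \ldots, p_0)$. Condition (QR1) requires $[p_0(*), p_0(*)] = [I, I] = \Ze$, which is immediate. Condition (QR2) requires that $p_0(*) \cdots p_0(*) = \Ze$ whenever $(*, \ldots, *) \not\in R^{\top}$; but since $R^{\top}$ is the universal relation, this hypothesis is never met, so (QR2) holds vacuously. Therefore $(p_0, \ldots, p_0) \in R^{\Qd \top}$, and $R^{\Qd \top}$ is the universal relation on $\{p_0\}$. The unique bijection above is thus an isomorphism $\Qd \top \cong \top$ of $\sg$-structures.

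No step presents a genuine obstacle: the argument is essentially bookkeeping forced by normalization together with the fact that every tuple over $\top$ lies in every relation, so (QR2) is vacuous. The affineness conclusion then follows by the definition of affine monad in \cite{kock1970monads}, specialized to the graded setting by noting that the isomorphism $\Qd \top \cong \top$ holds uniformly in $d$.
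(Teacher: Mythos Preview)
Your proposal is correct and follows exactly the approach the paper has in mind: the paper simply states that the result is immediate from normalization, and you have spelled out precisely the verification that normalization forces $\Qd\top$ to have a single element, after which the relational conditions (QR1), (QR2) are trivially or vacuously satisfied.
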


Now given structures $\As$ and $\Bs$, we define a map $\tddpAB : \Qd \As \times \Qdp \Bs \rarr \Qddp (A \times B)$ by $\tddpAB(p,q)(x,y) := p(x) \otimes q(y)$.

\begin{proposition}
This is a well-defined homomorphism, and the family $\{ \tddpAB \}$ defines a graded  natural transformation satisfying the monoidal coherence conditions, thus witnessing a commutative strength.
\end{proposition}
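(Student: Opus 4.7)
The plan is to verify the four assertions in turn, each reducing to routine algebraic manipulations of tensor products of projectors.

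First I would check well-definedness of $\tddpAB(p,q) \in \Qddp(A \times B)$. Each $p(x) \otimes q(y)$ is a projector in $\Projddp$, since the interchange law yields $(P \otimes Q)^{*} = P^{*} \otimes Q^{*}$ and $(P \otimes Q)^{2} = P^{2} \otimes Q^{2}$. Normalization follows from bilinearity of $\otimes$:
\[
\sum_{(x,y) \in A \times B} p(x) \otimes q(y) \; = \; \Bigl(\sum_{x \in A} p(x)\Bigr) \otimes \Bigl(\sum_{y \in B} q(y)\Bigr) \; = \; I_{d} \otimes I_{d'} \; = \; I_{dd'}.
\]
Finite support of $\tddpAB(p,q)$ is inherited from that of $p$ and $q$.

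For the homomorphism property, a tuple in the product relation on $\Qd \As \times \Qdp \Bs$ consists of pairs $(p_i, q_i)_{i=1}^{k}$ with $(p_1,\ldots,p_k) \in R^{\Qd \As}$ and $(q_1,\ldots,q_k) \in R^{\Qdp \Bs}$. Condition (QR1) on the image follows from the interchange law: the commutator $[p_i(x) \otimes q_i(y),\, p_j(x') \otimes q_j(y')]$ vanishes whenever the factor commutators $[p_i(x),p_j(x')]$ and $[q_i(y),q_j(y')]$ do, which holds by (QR1) applied in each component. For (QR2), if $((x_i,y_i))_{i=1}^{k} \not\in R^{A \times B}$, then either $(\vx) \not\in R^{A}$ or $(\vy) \not\in R^{B}$, and the interchange law gives
\[
\prod_{i=1}^{k}\bigl(p_i(x_i) \otimes q_i(y_i)\bigr) \; = \; \prod_{i=1}^{k} p_i(x_i) \;\otimes\; \prod_{i=1}^{k} q_i(y_i),
\]
with the offending factor vanishing by (QR2) applied in the corresponding structure.

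Naturality in both arguments is a direct calculation: $\Qd f$ sums projector values along fibres, and distributing $\otimes$ over orthogonal sums (valid because tensor is bilinear and the summands remain orthogonal) reconciles $\Qddp(f \times g) \circ \tddpAB$ with $\tddpAB \circ (\Qd f \times \Qdp g)$. For the coherence conditions, I would verify unit coherence via $\delta_x \otimes \delta_y = \delta_{(x,y)}$ (using $I_1 \otimes I_1 = I_1$); associator coherence from strict associativity of $\otimes$; and compatibility with the graded multiplication $\mudd$ from the interchange law $(P \otimes P')(Q \otimes Q') = PQ \otimes P'Q'$ applied inside the defining sums of $\mudd$. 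Commutativity of the strength, i.e.\ that the left-then-right and right-then-left combinations agree, reduces once more to the interchange law together with the symmetry of $\otimes$ on projectors via the isomorphism $\Qddp(A \times B) \cong \Qddp(B \times A)$. The main obstacle is not any individual step but the bookkeeping across several commutative diagrams; each diagram, when unpacked pointwise, boils down either to bilinearity of $\otimes$ over orthogonal sums or to the interchange law, exactly as in the classical verification that the distribution monad over a commutative semiring is monoidal and commutative.
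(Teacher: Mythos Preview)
The paper states this proposition without proof, so there is no detailed argument to compare against; the authors evidently regard it as a routine verification in the same spirit as the distribution-monad case they cite. Your sketch is correct and is exactly that routine verification: well-definedness via bilinearity and the interchange law, the homomorphism property via (QR1)/(QR2) in each factor, naturality by distributing $\otimes$ over the fibre sums defining $\Qd f$, and the coherence conditions from strict associativity and bilinearity of $\otimes$.
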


\end{document}